\begin{document}

\setcounter{page}{1}
\publyear{22}
\papernumber{2116}
\volume{186}
\issue{1-4}

   \finalVersionForARXIV

\title{Structure and Power: an Emerging Landscape}

\author{Samson Abramsky\thanks{Address for correspondence: Department of Computer Science, University College London,
                                        66-72 Gower St., London WC1E 6EA, U.K.}
 \\
Department of Computer Science\\
University College London\\
 66-72 Gower St., London WC1E 6EA, U.K.\\
 s.abramsky@ucl.ac.uk
 }

\runninghead{S. Abramsky}{Structure and Power: an Emerging Landscape}

\maketitle

\begin{abstract}
 In this paper, we give an overview  of some recent work on applying tools from category theory in finite model theory, descriptive complexity, constraint satisfaction, and combinatorics.
The motivations for this work come from Computer Science, but there may also be something of interest for model theorists and other logicians.\smallskip

The basic setting involves studying the category of relational structures via a resource-indexed family of adjunctions with some process category - which unfolds relational structures into treelike forms, allowing natural resource parameters to be assigned to these unfoldings.
One basic instance of this scheme allows us to recover, in a purely structural, syntax-free way:
the Ehrenfeucht-Fra{\"i}ss{\'e} game;
the quantifier rank fragments of first-order logic;
the equivalences on structures induced by (i) the quantifier rank fragments, (ii) the restriction of this fragment to the existential positive part, and (iii) the extension with counting quantifiers;
and the combinatorial parameter of tree-depth (Nesetril and Ossona de Mendez).
Another instance recovers the k-pebble game, the finite-variable fragments, the corresponding equivalences, and the combinatorial parameter of treewidth.
Other instances cover modal, guarded and hybrid fragments, generalized quantifiers, and a wide range of combinatorial parameters.
This whole scheme has been axiomatized in a very general setting, of arboreal categories and arboreal covers.\smallskip

Beyond this basic level, a landscape is beginning to emerge, in which structural features of the resource categories, adjunctions and comonads are reflected in degrees of logical and computational tractability of the corresponding languages.
Examples include semantic characterisation and preservation theorems, and Lov\'asz-type results on  counting homomorphisms.
\end{abstract}

\eject

\section{Introduction}

In recent work \cite{abramsky2017pebbling,abramsky2021relating}, a program has been initiated of
\begin{center}
\textbf{Relating Structure and Power}
\end{center}
Here we have in mind two major organizing principles  in the foundations of computation:
\begin{description}
\item[Structure] Compositionality and semantics, addressing the question of mastering the size and complexity of computer systems and software.

\item[Power] Expressiveness and computational complexity,
addressing the question of how we can harness the power of computation and recognize its limits.
\end{description}
A striking  feature of the current state of the art is that there are \emph{almost disjoint communities} of researchers studying \Struct~and \Power~respectively, with no common technical language or tools.
In our opinion, this is a major obstacle to fundamental progress in Computer Science.

The research program  initiated in \cite{abramsky2017pebbling,abramsky2021relating}, and developed further in \cite{conghaile2020game,dawar2021lov,abramsky2021arboreal}, aims at relating categorical semantics, which exemplifies \Struct, to finite model theory, which exemplifies \Power.
This is the focus of a current joint project with Anuj Dawar.\footnote{
\textsl{EPSRC-funded project EP/T00696X/1: Resources and Co-resources: a junction between categorical semantics and descriptive complexity}.}
Contributors to this program include Dan Marsden, Luca Reggio, Tom\'a\v{s} Jakl, Tom Paine, Nihil Shah, Adam \'{O} Conghaile and Yo\`av Montacute.

\medskip
The key initial idea of this program \cite{abramsky2017pebbling,abramsky2021relating} is to encapsulate various forms of model comparison game, in which Spoiler tries to distinguish two structures, and Duplicator tries to show they are the same,  as \emph{resource-indexed comonads} on the category of relational structures. Intuitively, these comonads correspond to \emph{co-resources}, which limit the ability of Spoiler to access the structure by bounding resources of some kind. Thus, if $C_k$ is such a comonad, with resource index $k$, then to have a homomorphism
\[ C_k A \to B \]
means that we only have to check the homomorphism conditions against limited -- $k$-bounded -- parts of the structure of $A$.

\medskip
Some key features have emerged in the elaboration of this idea in subsequent work:
\begin{itemize}
\itemsep=0.85pt
\item This idea has proved to be extremely robust. It can be used to capture a wide range of model comparison games, and the resource-indexed equivalences induced by the corresponding logics. These include the quantifier rank fragments, the finite variable fragments, the modal, guarded, hybrid and bounded fragments, generalized quantifiers, and more.
\item The basic idea of coresources $C_k A \to B$ seems at first blush restricted to forth-only equivalences, and hence to existential positive fragments. However, the ideas in fact extend smoothly to cover the full back-and-forth equivalences, using a refinement of the open maps formulation of bisimulation to \emph{open pathwise embeddings}.
Moreover, isomorphism in the coresource setting characterizes extensions of the logics with counting quantifiers.
\item Coalgebras $A \to C_k A$ correspond to resource-bounded \emph{structural decompositions} of $A$. The existence of such decompositions yields significant combinatorial invariants of $A$. This allows important combinatorial parameters such as \emph{tree-width} and \emph{tree-depth} to be recovered.
\item The whole pattern of comonads, their coalgebras, and the corresponding resolutions into co\-mon\-adic adjunctions, forms a robust template which recurs throughout finite model theory and descriptive complexity. This template has been axiomatized at a very general categorical level in \cite{abramsky2021arboreal}. This provides a new kind of axiomatic basis for model theory, finite and infinite.
\end{itemize}
We can regard the axiomatization in \cite{abramsky2021arboreal} as the culmination of a ``first wave'' of this research program. Results building on this initial phase of development are beginning to emerge.
These include:
\begin{itemize}
\itemsep=0.85pt
\item General versions of model-theoretic results such as preservation theorems: Rossman's homomorphism preservation theorems, the van Benthem-Rosen theorem on bisimulation invariance, etc.
\item Uniform proofs of preservation theorems in the finite and infinite cases: ``model theory without compactness''.
\item Structural features of comonads (idempotence, bisimilar companions property), and their significance for computational tractability.
\item Lov\'asz-type theorems on counting homomorphisms.
\item Combinatorial parameters: concrete cases, and an axiomatic approach via density comonads.
\item Feferman-Vaught-Mostowski type theorems, with applications to Courcelle's theorem.
\item New cohomological approximation algorithms for constraint satisfaction and structure isomorphism.
\end{itemize}
Our aim in the present paper is to give an overview of this line of research, and the new directions currently being developed. We aim to give an accessible account, emphasising conceptual elements, and keeping technicalities to a minimum. In particular, we aim to make this presentation accessible both to people working in finite model theory and descriptive complexity, and to ``mainstream'' model theorists.

\subsection*{Boaz Trakhtenbrot}
This paper is submitted to a special issue in celebration of the centenary of Boaz Trakhtenbrot, one of the founding fathers of theoretical computer science. I had the privilege of meeting Boaz on a number of occasions, and hosted a visit by him to Imperial College in 1990.
I warmly remember the fascinating conversations I had with him. He was a unique link to the history and prehistory of our field.
Even more importantly, he lived and breathed his science in the present, with passionate commitment.

\medskip
Boaz was a pioneer of both \Struct~and \Power. In particular, many of the renowned results from the earlier part of his career relate to \Power:
\begin{itemize}
\itemsep=0.85pt
\item Trakhtenbrot's theorem
\eject
\item B\"uchi-Elgot-Trakhtenbrot theorem
\item Borodin-Trakhtenbrot gap theorem
\end{itemize}
In his later work, \Struct~played an equally prominent role:
\begin{itemize}
\itemsep=0.85pt
\item semantics of dataflow
\item behaviour structures and nets
\item hybrid systems
\end{itemize}
I believe that Boaz had a unified vision of the field, and that our attempts to relate \Struct~ to \Power~are fully in the spirit of his work.

\section{Background}

\subsection{The setting: relational structures}

A relational vocabulary $\sg$  is a family of relation symbols $R$, each of some arity $n > 0$.
A relational structure for $\sg$ is $(A, \{ R^A \mid R \in \sg\})$, where $\RA \subseteq A^n$. We shall not distinguish notationally between a structure and its underlying set.
A homomorphism of $\sg$-structures $f : A \rarr B$ is a function $f : A \rarr B$ such that, for each relation $R \in \sg$ of arity $n$ and $(a_1, \ldots , a_{n}) \in A^{n}$:
\[  (a_1, \ldots , a_{n}) \in \RA \IMP (f(a_1), \ldots , f(a_n)) \in \RB . \]

These notions are pervasive in
\begin{itemize}
\itemsep=0.8pt
\item logic (model theory),
\item computer science (databases, constraint satisfaction, finite model theory)
\item combinatorics (graphs and graph homomorphisms).
\end{itemize}

Our setting will be $\CS$, the category of $\sg$-structures and homomorphisms.

\subsection{Remarks to model theorists}

A first remark is that $\CS$ is \emph{not} the usual category (implicitly) studied in model theory. Instead, stronger properties are required for morphisms, typically that they are embeddings or elementary embeddings, so that truth of a larger class of first-order sentences is preserved. These categories are ``thin'', with rather few morphisms, and consequently lack interesting categorical structure.
By contrast, $\CS$ has a rich categorical structure -- it is in fact a quasitopos.\footnote{This observation is due to Dan Marsden.}
Moreover, the fact that we focus on this category with all homomorphisms does not at all mean that we cannot capture elementary equivalence, as we shall see.

This difference in setting is surely part of the explanation for the strange dissociation which exists between model theory and category theory.
Both model theory and category study the structure of
 mathematical theories in general, but they do so in rather different ways, and with strikingly little communication between them.

\medskip
What makes this dissociation all the stranger is indicated by the following diagram:
\begin{center}
\begin{tikzcd}[column sep=tiny]
\textrm{Model theory} \ar[rd,leftrightarrow] & & \textrm{Category theory} \ar[ld,leftrightarrow] \\
& \textrm{algebraic and arithmetic geometry} &
\end{tikzcd}
\end{center}
Algebraic and arithmetic geometry are prime foci of contemporary applied model theory.
Indeed, in an aphorism of Wilfrid Hodges \cite{hodges1997shorter}:
\begin{center}
model theory = algebraic geometry - fields.
\end{center}
Yet contemporary arithmetic and algebraic geometry are imbued with the Grothendieck heritage: categories, sheaves, schemes, cohomology -- and beyond (stacks, higher categories etc.).

In our view, just as \Struct~and \Power~can benefit from being brought together and intertwined, so category theory and model theory can both benefit from some rapprochement.

One existing locus of interaction is between accessible categories and abstract elementary classes, see e.g. \cite{beke2012abstract}.
The connections made between these theories have a heavily set-theoretic flavour.

By contrast, we aim for a resource-sensitive structure theory ``down below'', capturing fine structure of resource-indexed fragments in a form relevant to computational tractability and complexity.

\subsection{Brief review of adjunctions and comonads}

We recall the notion of adjunction through an example which hopefully will be familiar.

Given a commutative ring $R$, the category of $R$-modules is denoted $\RMod$. There is an evident forgetful functor $U : \RMod \to \Set$, and an adjunction\vspace*{-1mm}
\[ \begin{tikzcd}
\Set \arrow[r, bend left=25, ""{name=U, below}, "\FR"{above}]
\arrow[r, leftarrow, bend right=25, ""{name=D}, "U"{below}]
& \RMod
\arrow[phantom, "\textnormal{{$\bot$}}", from=U, to=D] \vspace*{-1mm}
\end{tikzcd}
\]

$R^{(X)}$ is the free module generated by $X$ (formal finite $R$-linear combinations over $X$).
The unit of the adjunction is the map $\eta_X : X \to U R^{(X)}$ which sends $x$ to $1 \cdot x$.

\medskip
Freeness is captured by a  \emph{universal mapping property}:
\begin{center}
$\Set$ $\qquad$ $\qquad \qquad \quad \quad$ $\RMod$ \\
\vspace{0.1in}
\begin{tikzcd}[row sep =large]
X \ar[r, "\eta_X"] \ar[rd, "f"'] & U R^{(X)}  \ar[d, dashrightarrow, "U \hat{f}"]  \\
& UM
\end{tikzcd}
$ \qquad \qquad$
\begin{tikzcd}[row sep =large]
R^{(X)}  \ar[d, dashrightarrow, "\hat{f}"] \\
M
\end{tikzcd}
\end{center}
which says that every map $f : X \to U M$ factors through $\eta_X$ via a unique $R$-module morphism $\hat{f} : R^{(X)} \to M$.
\eject

Given any adjunction
\[\begin{tikzcd}[column sep =large]
\CC \arrow[r, bend left=25, ""{name=U, below}, "L"{above}]
\arrow[r, leftarrow, bend right=25, ""{name=D}, "R"{below}]
& \DD
\arrow[phantom, "\textnormal{{$\bot$}}", from=U, to=D]
\end{tikzcd}
\]
there is an associated \emph{monad} $RL$ on $\CC$, and \emph{comonad} $LR$ on $\DD$.\footnote{Monads and comonads have additional structure, of a unit and multiplication, and counit and comultiplication, respectively. We will give additional details later when we come to coalgebras, but prefer to keep the exposition simple for now.}
These notions are pervasive in mathematics  \cite{mac2013categories}:
\begin{itemize}
\item monads occur in topology, and finitary monads on $\Set$ subsume universal algebra.
\item comonads feature in descent theory.
\item adjunctions, monads and comonads specialize to familiar notions on posets: adjunctions  specialize to Galois correspondences, monads to closure operators, and comonads to coclosures.
\end{itemize}

\smallskip
In our context, we can think of a resource-indexed comonad $C_k$ as a \emph{modality}:
\[ C_k A \to B \]
means we have a homomorphism which only needs to be checked against a limited part of the structure of $A$.
This is exactly what logical languages do in model theory!
They calibrate limited means for accessing structures.

\subsection{The general scheme: resource-indexed adjunctions}

In our approach, we build tree-structured covers of a given, purely extensional relational structure.
Such a tree cover will in general not have the full properties of the original structure, but be a ``best approximation'' in some resource-restricted setting.
More precisely, this means that for each integer $k>0$ we have an adjunction
\[\begin{tikzcd}[column sep =large]
\AC_k \arrow[r,  bend left=25, ""{name=U, below}, "L_k"{above}]
\arrow[r,  leftarrow, bend right=25, ""{name=D}, "R_k"{below}]
& \CS
\arrow[phantom, "\textnormal{{$\bot$}}", from=U, to=D]
\end{tikzcd}
\]
between the category of $\sg$-structures, and the resource-bounded category $\AC_k$.
This adjunction yields the corresponding comonad $C_k = L_k R_k$.

\medskip
The objects of the category where the approximations live have an intrinsic tree structure, which can be captured axiomatically, as \emph{arboreal categories}.
The tree encodes a process for generating (parts of) the relational structure, to which resource notions can be applied.
This allows us to apply these resource notions to the objects of the extensional category via the adjunction.

\medskip
We shall now see what all this means in terms of a simple but fundamental example.\smallskip

\section{First example}\label{EFsec}

We need a few notions on posets.
A \emph{forest} is a poset $(F, {\leq})$ such that, for all $x \in F$, the  set of predecessors of $x$  is a finite chain.\footnote{A chain in a poset $P$ is a linearly ordered subset $C \subseteq P$.} The roots of a forest are the minimal elements. The \emph{height} of a forest is the supremum of the cardinalities of its chains (note that the branches of the forest are its maximal chains).
A \emph{tree} is a forest with a unique root.
A forest morphism is a map preserving roots and the \emph{covering relation}: $x \cvr y$ if $x < y$, and for all $z$, $x \leq z \leq y$ implies $z=x$ or $z=y$.

The \emph{Gaifman graph} of a $\sg$-structure $A$ has set of vertices $A$, and $a$ adjacent to $b$ if $a \neq b$ and $a, b$ both occur in some tuple of one of the relations $\RA$, $R \in \sg$.

Now a \emph{forest-ordered $\sg$-structure} $(\As, {\leq})$ is a $\sg$-structure $\As$ with a forest order $\leq$ on $A$.
This must satisfy the following condition:
\begin{center}
(E) elements of $A$ which are adjacent in the Gaifman graph of $\As$ must be comparable in the order.
\end{center}

The \emph{mininum height} of such a forest order on $\As$ is the \emph{tree-depth} of $\As$ (Ne\v{s}et\v{r}il and Ossona de Mendez \cite{nevsetvril2006tree}).
This is an important combinatorial parameter, used extensively by Rossman in his Homomorphism Preservation Theorems.

Morphisms of forest-ordered $\sg$-structures are $\sg$-homomorphisms which are also forest morphisms.
This gives rise to  a category $\RT(\sg)$, and an evident  forgetful functor $U : \RT(\sg) \to \CS$. $\RT(\sg)$ is our first example of an \emph{arboreal category}.

There is a natural resource-indexing by height.
For each $k>0$, if we restrict to forest orders of height $\leq k$, we get a sub-category $\RTk(\sg)$, and a functor $U_k : \RTk(\sg) \to \CS$.
This functor has a right adjoint $G_k$, giving rise to a comonad $\Ek = U_k G_k$ on $\CS$.

\medskip
We describe the construction for $G_k A$, which builds a forest-ordered $\sg$-structure from any $\sg$-structure $A$ in a ``cofree'' way.\footnote{This is \emph{co}free since, as we will see, it has a \emph{co}universal mapping property.}
\begin{itemize}
\itemsep=0.98pt
\item Given a structure $\As$, the universe of $G_k \As$ is $\Alk$, the  non-empty sequences of length $\leq k$.

\item This is forest-ordered by the prefix order.

\item
The counit map $\epsA : \Alk \to A$ sends a sequence $[a_1, \ldots , a_n]$ to  $a_n$.

\item
The key question is: how do we lift the $\sg$-relations on $\As$ to $G_k \As$?
Given e.g.~a binary relation $R$, we define $R^{G_k \As}$ to be the set of pairs of sequences $(s, t)$ such that

\begin{itemize}
\item $s \preford t$ or $t \preford s$ (in the prefix order)
\item $\RA(\epsA(s), \epsA(t))$.
\end{itemize}

\noindent This generalizes straightforwardly to $n$-ary relations. Given an $n$-ary relation $R$, $R^{G_k \As}$ is the set of tuples of sequences $(s_1, \ldots , s_n)$ such that for all $i, j$, $s_i$ is comparable with $s_j$ in the prefix order, and  $\RA(\epsA(s_1), \ldots , \epsA(s_n))$.
\end{itemize}
\eject
We verify the couniversal property:
\vspace{.1in}
\begin{center}
Arboreal category $\qquad $ Extensional category \\
\vspace{0.1in}
\begin{tikzcd}[row sep =large]
G_k \As \\
\TA \ar[u, dashrightarrow, "\hat{f}"]
\end{tikzcd}
$\quad \qquad \qquad$
\begin{tikzcd}[row sep =large]
U_k G_k \As \ar[r, "\epsA"] & \As \\
U_k \TA \ar[u, dashrightarrow, "U_k \hat{f}"] \ar[ur, "f"']
\end{tikzcd}
\end{center}
\vspace{.1in}
This says that any $\sg$-homomorphism $f : U_k \TA \to \As$ factors uniquely through the counit $\epsA$ via a morphism $\hat{f} : \TA \to G_k \As$ of forest-ordered $\sg$-structures.
To see this, note that for each $x \in \TA$, its predecessors in the forest order  form a  covering chain $x_1 \cvr \cdots \cvr x_n = x$, with $n \leq k$, and $x_1$ a root. We define $\hat{f}(x) = [f(x_1), \ldots , f(x_n)]$. It is easily verified that this gives a forest morphism. Moreover, by property (E), any instance $\vec{x} \in R^{\TA}$ must occur along a chain in $\TA$, and hence the image of this tuple under $\hat{f}$ will be a tuple of sequences $\vec{s}$ pairwise related in the prefix order. Since $f$ is a homomorphism, by the definition of $R^{G_k A}$ we will have $\vec{s} \in R^{G_k A}$. Thus $\hat{f}$ is a morphism in $\RT(\sg)$.

\subsection*{Consequences}
Recall that the right adjoint is uniquely determined by the forgetful functor, and the comonad by the adjunction. So everything follows from the delineation of the arboreal category $\RTk(\sg)$, and the evident forgetful functor $U_k : \RTk(\sg) \to \CS$.

\medskip
As we shall now see, this  structure gives us directly:
\begin{itemize}
\item The Ehrenfeucht-\Fraisse~game, which characterizes the elementary equivalences $\equiv_k$ induced by the quantifier-rank indexed fragments  of FOL.
\item This leads directly to syntax-free, purely structural descriptions of the equivalences of structures induced by:

\begin{itemize}
\item the full fragment of quantifier rank $\leq k$
\item the existential positive part of the fragment
\item the extension of the fragment with counting quantifiers.
\end{itemize}
\item Note that model classes of formulas of quantifier rank $k$ are exactly the unions of equivalence classes of $\equiv_k$. Thus first-order definability, graded by quantifier rank, is also captured in a syntax-free fashion.
\item We also recover the important tree-depth combinatorial parameter from the coalgebras of the comonad.
\end{itemize}

Moreover, this template can be used to give similar analyses of a wealth of other logical and combinatorial notions.

\subsection*{The Ehrenfeucht-\Fraisse~game (\cite{ebbinghaus2005finite})}

Model comparison games in general are especially important in finite model theory, where the compactness theorem is not available.

\medskip
\textbf{The EF-game between $\As$ and $\Bs$}. In the $i$'th round, Spoiler moves by choosing an element in $A$ or $B$; Duplicator responds by choosing an element in the other structure. Duplicator wins after $k$ rounds if the relation $\{ (a_i, b_i) \mid 1 \leq i \leq k \}$ is a partial isomorphism.

In the \emph{existential EF-game}, Spoiler only plays in $\As$, and Duplicator responds in $\Bs$. The winning condition is that the relation $\{ (a_i, b_i) \mid 1 \leq i \leq k \}$ is a partial homomorphism.

The \emph{Ehrenfeucht-\Fraisse~Theorem}  \cite{fraisse1955quelques,ehrenfeucht1961application}  says that a winning strategy for Duplicator in the $k$-round EF game characterizes the equivalence $\eqLk$, where $\Lk$ is the fragment of first-order logic of formulas with quantifier rank $\leq k$.

\subsection*{CoKleisli maps are strategies}

Intuitively, an element of $\Alk$ represents a play in $\As$ of length $\leq k$.
A coKleisli morphism $\Ek \As \rarr \Bs$ represents a Duplicator strategy for the existential Ehrenfeucht-\Fraisse~game with $k$ rounds:
\begin{itemize}
\itemsep=0.9pt
\item Spoiler plays only in $\As$, and $b_i = f [a_1, \ldots , a_i]$ represents Duplicator's response in $\Bs$ to the $i$'th move by Spoiler.
\item The winning condition for Duplicator in this game is that, after $k$ rounds have been played,
the induced relation $\{ (a_i, b_i) \mid 1 \leq i \leq k \}$ is a partial homomorphism from $\As$ to $\Bs$.
\end{itemize}

\begin{theorem}
\label{EFgamethm}
The following are equivalent:
\begin{enumerate}
\itemsep=0.9pt
\item There is a homomorphism $\Ek \As \rarr \Bs$.
\item Duplicator has a winning strategy for the existential Ehrenfeucht-\Fraisse~game with $k$ rounds, played from $\As$ to $\Bs$.
\item For every existential positive sentence $\vphi$ with quantifier rank $\leq k$, $\As \models \vphi \IMP \Bs \models \vphi$.
\end{enumerate}
\end{theorem}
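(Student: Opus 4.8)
The plan is to establish the two equivalences $(1)\Leftrightarrow(2)$ and $(2)\Leftrightarrow(3)$ separately, the first by directly unwinding the construction of $\Ek$, the second as an existential-positive form of the Ehrenfeucht-\Fraisse~theorem.

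First I would treat $(1)\Leftrightarrow(2)$. A homomorphism $f : \Ek \As \rarr \Bs$ is, on underlying sets, a function assigning to each nonempty sequence $[a_1,\ldots,a_i] \in \Alk$ an element $f[a_1,\ldots,a_i] \in B$; reading $b_i := f[a_1,\ldots,a_i]$ as Duplicator's reply to the Spoiler play $a_1,\ldots,a_i$ is exactly the passage between coKleisli maps and strategies described above. The content is that $f$ preserves relations iff the strategy is winning. For the forward direction, given a play $a_1,\ldots,a_k$ and an instance $\RA(a_{i_1},\ldots,a_{i_n})$, the prefixes $s_j := [a_1,\ldots,a_{i_j}]$ are pairwise $\preford$-comparable with $\epsA(s_j)=a_{i_j}$, so $(s_1,\ldots,s_n) \in R^{G_k\As}$ by definition, and preservation gives $(b_{i_1},\ldots,b_{i_n}) \in \RB$; together with equality this says the induced relation is a partial homomorphism. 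Conversely, any tuple in $R^{G_k\As}$ lies along a single branch, hence is realized as the prefixes of one play, and the winning condition for that play delivers the required instance of $\RB$; thus $f$ is a homomorphism.

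Next, $(2)\Rightarrow(3)$ goes by induction on quantifier rank along the game. I would phrase Duplicator's strategy as a forth system of partial homomorphisms closed under one-step extension, and prove that if Duplicator can still win $m$ rounds from a position $(\vec{a},\vec{b})$, then every existential positive formula of quantifier rank $\le m$ true of $\vec{a}$ in $\As$ is true of $\vec{b}$ in $\Bs$. The base case is immediate since a partial homomorphism preserves atomic formulas, and $\wedge,\vee$ transfer trivially; for $\exists x\,\psi$, a witness in $\As$ becomes a Spoiler move to which the strategy replies, and the inductive hypothesis applies to the resulting position. Taking $m=k$ and the empty position yields sentence preservation.

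The main work is $(3)\Rightarrow(2)$, which I regard as the crux. Assuming a finite relational signature $\sg$, for each $m$ and each tuple $\vec{a}$ in $\As$ I would define an existential positive characteristic (Hintikka) formula $\rho^m_{\vec{a}}$ of quantifier rank $m$, by $\rho^0_{\vec{a}} = \bigwedge\{\alpha \text{ atomic} : \As \models \alpha(\vec{a})\}$ and $\rho^{m+1}_{\vec{a}} = \rho^0_{\vec{a}} \wedge \bigwedge_{a \in A}\exists x\,\rho^m_{\vec{a}a}$. Finiteness of $\sg$ together with the quantifier-rank bound guarantees only finitely many such formulas up to equivalence, so these are genuine finitary formulas. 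The key lemma, proved by simultaneous induction on $m$, is that $\rho^m_{\vec{a}}$ entails every existential positive formula of quantifier rank $\le m$ that holds of $\vec{a}$. Granting this, let $\vec{a} \preceq_m \vec{b}$ mean that every existential positive formula of quantifier rank $\le m$ true of $\vec{a}$ in $\As$ holds of $\vec{b}$ in $\Bs$; by the lemma this is equivalent to $\Bs \models \rho^m_{\vec{a}}(\vec{b})$, and $(3)$ is exactly $\emptyset \preceq_k \emptyset$. One then shows the forth property: if $\vec{a}\preceq_{m+1}\vec{b}$, then for every $a \in A$ there is $b \in B$ with $\vec{a}a \preceq_m \vec{b}b$, obtained by applying preservation to $\exists x\,\rho^m_{\vec{a}a}$. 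Unwinding $\preceq_0$ via the base case shows each position is a partial homomorphism, where including equality atoms forces well-definedness, so the forth property is precisely a winning $k$-round strategy. The delicate points are the correct definition of the existential positive characteristic formula, which lacks negation and so cannot simply \emph{name} a type, and the reliance on finiteness of $\sg$ to keep the conjunctions finitary.
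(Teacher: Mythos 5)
Your proposal is correct and takes essentially the same route as the paper: the equivalence $(1)\Leftrightarrow(2)$ is exactly the paper's ``coKleisli maps are strategies'' dictionary, obtained by unwinding the definition of $R^{G_k \As}$, and $(2)\Leftrightarrow(3)$ is the classical existential-positive Ehrenfeucht--Fra\"{\i}ss\'e theorem, which this survey states and attributes to the literature rather than proving. Your Hintikka-style argument for $(3)\Rightarrow(2)$, including the explicit finite-vocabulary hypothesis and the use of equality atoms to secure functionality, simply fills in the standard details that the paper delegates to its references.
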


\subsection*{Open pathwise embeddings and back-and-forth equivalences}

How do we capture back-and-forth equivalences, and hence the whole logic rather than just the existential positive part?
The  idea is to work in the arboreal category $\RTk(\sg)$, where we have enough process structure to make game and bisimulation notions meaningful.
The key notions are
\begin{itemize}
\itemsep=0.9pt
\item \emph{paths}, \ie objects of $\RTk(\sg)$ in which the order is linear (so the forest is a single branch), and
\item \emph{path embeddings}, \ie forest morphisms with paths as domains which are embeddings of relational structures.
\end{itemize}
These are special cases of notions which are axiomatised in the arboreal categories setting in great generality in \cite{abramsky2021arboreal}.

A morphism $f\colon X\to Y$ in $\RTk(\sg)$ is a \emph{pathwise embedding} if, for all path embeddings $m\colon P\emb X$, the composite $f\circ m$ is a path embedding.

\medskip
To capture the ``back'' or p-morphism condition, we introduce a notion of \emph{open map} \cite{joyal1993bisimulation} that, combined with the concept of pathwise embedding, will allow us to define an appropriate notion of bisimulation.
A morphism $f\colon X\to Y$ in $\RTk(\sg)$ is said to be \emph{open} if it satisfies the following path-lifting property: Given any commutative square
\[\begin{tikzcd}
P \arrow[rightarrowtail]{r} \arrow[rightarrowtail]{d} & Q \arrow[rightarrowtail]{d} \arrow[dashed]{dl} \\
X \arrow{r}{f} & Y
\end{tikzcd}\]
with $P,Q$ paths, there exists a diagonal filler $Q\to X$ (\ie~an arrow $Q\to X$ making the two triangles commute).
(Note that, if it exists, such a diagonal filler must be an embedding.)

\medskip
If we read the embeddings from $P$ as the current positions reached in $X$ and $Y$, and the extension from $P$ to $Q$ as Spoiler playing a new move in $Y$, then the diagonal filler witnesses the ability of Duplicator to find a matching move in $X$.

\medskip
A \emph{bisimulation} between objects $X,Y$ of $\RTk(\sigma)$ is a span of open pathwise embeddings
\[\begin{tikzcd}
& R  \arrow[dr] & \\
X \arrow[ur, leftarrow] & & Y
\end{tikzcd}
\]
If such a bisimulation exists, we say that $X$ and $Y$ are \emph{bisimilar}.

\begin{theorem}
$G_k \As$ and $G_k \Bs$ are bisimilar in $\RTk(\sigma)$ iff Duplicator has a winning strategy in the $k$-round Ehrenfeucht-\Fraisse~game between $\As$ and $\Bs$.
\end{theorem}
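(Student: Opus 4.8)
The plan is to set up a dictionary between the categorical data of a bisimulation and the combinatorial data of the game, translating the two openness conditions into the forth and back moves. The first step is to analyse what a path into a cofree object looks like. Because $G_k \As$ is cofree, a path embedding $m\colon P \emb G_k \As$ has as image a single branch $[a_1] \cvr [a_1,a_2] \cvr \cdots \cvr [a_1,\ldots,a_n]$ with $n \le k$, and every such branch arises this way; applying $\epsA$ reads off the play $a_1,\ldots,a_n$ in $\As$. Within a single branch all elements are pairwise comparable in the prefix order, so by the very definition of $R^{G_k \As}$ the relational structure that $P$ must carry (being a relational embedding) records exactly the facts $\RA(a_{i_1},\ldots,a_{i_m})$. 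Consequently a single path $P$ equipped with embeddings into \emph{both} $G_k \As$ and $G_k \Bs$ is the same datum as a pair of equal-length plays $([a_1,\ldots,a_n],[b_1,\ldots,b_n])$ for which $\{(a_i,b_i) \mid 1 \le i \le n\}$ is a partial isomorphism, i.e.\ a winning position of the $k$-round Ehrenfeucht--\Fraisse~game. This identification, together with the reading of openness given in the text (extending the path in the codomain is Spoiler moving; the diagonal filler is Duplicator answering in the domain), drives both directions.

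For the direction from a bisimulation $G_k \As \xleftarrow{p} R \xrightarrow{q} G_k \Bs$ to a strategy, I would take as Duplicator's positions precisely those matched plays that arise from some path of $R$ via $p$ and $q$. Each such position is a partial isomorphism because $p$ and $q$ are pathwise embeddings, so they carry paths of $R$ to path embeddings into the two cofree objects and the first paragraph applies. The initial position is covered since the span is nonempty. For closure under Spoiler moving in $\Bs$: the current position is a path $P' \emb R$; Spoiler's new move extends the $q$-image to a path $Q \emb G_k \Bs$, and the resulting square commutes, so openness of $q$ supplies a diagonal filler $Q \emb R$; composing with $p$ yields Duplicator's answer in $\As$, with the new position again factoring through $R$. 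Openness of $p$ handles Spoiler moving in $\As$ symmetrically. Hence the positions coming from $R$ constitute a winning strategy.

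For the converse I would turn a winning strategy into its set $W$ of reachable positions --- nonempty partial isomorphisms of length $\le k$, closed under the forth and back extension moves --- and build $R \in \RTk(\sigma)$ whose underlying forest is the nonempty positions of $W$ under the prefix order, with relations defined so that the two projections $p,q$ sending $([a_1,\ldots,a_n],[b_1,\ldots,b_n])$ to $[a_1,\ldots,a_n]$ and to $[b_1,\ldots,b_n]$ become relational embeddings; this is consistent precisely because every position is a partial isomorphism. Then $p,q$ preserve roots and the covering relation, so they are morphisms; they are pathwise embeddings by construction; and the forth/back closure of $W$ is exactly the path-lifting property making each of $p,q$ open. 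Thus $G_k \As \xleftarrow{p} R \xrightarrow{q} G_k \Bs$ is a bisimulation.

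The real work, and the place demanding care, is in verifying that the abstract categorical notions line up with the game on the nose rather than in any single hard computation. In particular I would need to check that the constructed $R$ genuinely lives in $\RTk(\sigma)$ --- that its order is a forest order of height $\le k$ satisfying the adjacency condition (E) --- that \emph{open} really unpacks to both the forth and the back move with the correct handedness on each leg, and that the comparability clause in the definition of $R^{G_k \As}$ loses no facts. This last point is where cofreeness of $G_k$ and the fact that a branch is a chain are essential: they guarantee that every relevant relational fact along a play is recorded within the path and hence transported by the embeddings, which is what forces the partial-isomorphism (rather than merely partial-homomorphism) winning condition. Matching the height bound $k$ to the round count, and handling the initial position and the roots, are the remaining bookkeeping points.
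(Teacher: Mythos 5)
Your overall architecture is sound, and it is essentially a concrete inlining of the route the paper itself takes: the paper obtains this statement from the abstract result that bisimilarity coincides with Duplicator winning the abstract game $\G(X,Y)$ (Theorem~\ref{th:gamesiffbisim}), together with the observation, delegated to \cite{abramsky2021relating}, that $\G(G_k \As, G_k \Bs)$ specialises to the concrete $k$-round Ehrenfeucht-\Fraisse~game by matching winning conditions. Your span-from-strategy construction (the forest of reachable positions, with relations pulled back so that the projections are pathwise embeddings) and your strategy-from-span argument (openness of each leg supplying Duplicator's responses, with the correct handedness on each side) are exactly the concrete shadows of that abstract proof, and both of those steps are in order.

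There is, however, a genuine gap at the point you yourself flagged as the crux: the claimed dictionary ``a path embedded in both $G_k \As$ and $G_k \Bs$ is the same thing as a partial isomorphism'' is false as stated. An embedding records preservation and reflection of the $\sg$-relations along the play, but it records nothing about \emph{equality} of the played elements, whereas a partial isomorphism (the winning condition the paper takes from \cite{ebbinghaus2005finite}) must in addition be the graph of an injective partial function, i.e.\ $a_i = a_j \iff b_i = b_j$. Concretely, take $\sg$ empty, $A = \{a\}$, $B = \{b_1, b_2\}$, $k = 2$. The span with $R = G_2 \Bs$, right leg the identity, and left leg the level-preserving collapse sending each sequence of length $n$ over $B$ to the length-$n$ sequence over $\{a\}$, consists of open pathwise embeddings (every relational condition is vacuous in the empty vocabulary), so $G_2 \As$ and $G_2 \Bs$ are bisimilar; yet Spoiler wins the $2$-round game by playing $b_1$ then $b_2$, since Duplicator's forced replies yield $\{(a,b_1),(a,b_2)\}$, which is not even the graph of a partial function. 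Thus your first direction (bisimulation implies winning strategy) breaks, and in fact the theorem is only correct under a convention that this survey suppresses: in the underlying papers \cite{abramsky2017pebbling,abramsky2021relating}, the vocabulary is expanded with a binary relation $I$ interpreted as the identity, so that $I^{G_k \As}$ relates comparable sequences with equal last elements. With that convention, path embeddings do transport equality facts, your dictionary becomes correct, and the rest of your argument (including the converse direction, where partial isomorphism is stronger than what you need) goes through.
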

Note that we use the \emph{resource category} $\RTk(\sigma)$ to study logical properties of objects of the \emph{extensional category} $\CS$.

\subsection*{Connection to logic}

We consider the following fragments of first-order logic:
\begin{itemize}
\itemsep=0.85pt
\item $\Lk$ is the fragment of quantifier-rank $\leq k$.
\item $\ELk$ is the existential positive fragment of $\Lk$
\item $\Lck$ is the extension of $\Lk$ with counting quantifiers $\egen$ (``there exists at least $n$'').
\end{itemize}

We define three resource-indexed equivalences on structures $\As$, $\Bs$ of $\CS$ using the resource categories $\RTk(\sigma)$:
\begin{itemize}
\itemsep=0.9pt
\item $\As \eqaGk \Bs$ iff there are morphisms $G_k \As \rarr G_k \Bs$ and $G_k \Bs \rarr G_k \As$. \\
Note that there need be no relationship between these morphisms.
\eject
\item $\As \eqbGk \Bs$ iff $G_k \As$ and $G_k \Bs$ are bisimilar in $\RTk(\sigma)$.
\item $\As \eqcGk \Bs$ iff $G_k \As$ and $G_k \Bs$ are isomorphic in $\RTk(\sigma)$.
\end{itemize}
\begin{theorem}\label{EFthm}
For structures $\As$ and $\Bs$:
\begin{flushleft}
\begin{tabular}{llcl}
(1) & $\As \eqELk \Bs$ & $\; \IFF \;$ & $\As \eqaGk \Bs$. \\
(2) & $\As \eqLk \Bs$ & $\; \IFF \;$ & $\As \eqbGk \Bs$. \\
(3) & $\As \eqLck \Bs$ & $\; \IFF \;$ & $\As \eqcGk \Bs$.
\end{tabular}
\end{flushleft}
\end{theorem}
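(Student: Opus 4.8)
The plan is to reduce all three biconditionals to game characterisations that are either already recorded in the excerpt or are mild variants, using the adjunction $U_k \dashv G_k$ as the bridge between the extensional category $\CS$ and the arboreal category $\RTk(\sigma)$. The single most useful ingredient is the hom-set bijection delivered by the couniversal property: instantiating that property with domain $G_k \As$ gives
\[ \CS(\Ek \As, \Bs) \;\cong\; \RTk(\sigma)(G_k \As, G_k \Bs), \]
so a morphism $G_k \As \rarr G_k \Bs$ of forest-ordered structures exists exactly when a coKleisli map $\Ek \As \rarr \Bs$ does. With this in hand the three parts fall out of the three grades of game: forth-only, back-and-forth, and bijective back-and-forth.

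For (1), I would simply combine the displayed bijection with Theorem~\ref{EFgamethm}. By that theorem a coKleisli map $\Ek \As \rarr \Bs$ exists iff every existential positive sentence of quantifier rank $\leq k$ true in $\As$ is true in $\Bs$, and symmetrically for $\Ek \Bs \rarr \As$. Hence $\As \eqaGk \Bs$ (both arrows present) holds iff preservation goes both ways, which for the existential positive fragment is exactly $\As \eqELk \Bs$. For (2), no new work is needed beyond the two theorems already stated: bisimilarity of $G_k \As$ and $G_k \Bs$ is equivalent to Duplicator winning the $k$-round Ehrenfeucht--\Fraisse~game, and by the Ehrenfeucht--\Fraisse~Theorem this is exactly $\As \eqLk \Bs$; composing the two biconditionals yields $\As \eqbGk \Bs \IFF \As \eqLk \Bs$.

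The real content is (3), and this is where I expect the main obstacle to lie. The strategy is to route through the \emph{bijective} Ehrenfeucht--\Fraisse~game (the quantifier-rank analogue of the Hella / Immerman--Lander pebble game), in which in each round Duplicator must commit to a bijection $A \to B$ before Spoiler selects an element, and wins if the induced relation is a partial isomorphism; this game characterises $\eqLck$, i.e. equivalence in the extension with the counting quantifiers $\egen$. From an isomorphism $\theta\colon G_k \As \xrightarrow{\cong} G_k \Bs$ I would extract a winning bijective strategy: since $\theta$ preserves and reflects the prefix order, roots, and the covering relation, it restricts to bijections between the children of $s$ and the children of $\theta(s)$ at every node, and reading these off via the counit $\epsA$ produces precisely the stage-wise bijections of the game, with the partial-isomorphism condition guaranteed because $\theta$ both preserves and reflects each $R^{G_k}$. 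Conversely, because the elements of $G_k \As$ are the full play sequences, a history-dependent winning strategy glues into a single map $\theta([a_1,\ldots,a_n]) = [b_1,\ldots,b_n]$ along the tree; injectivity and surjectivity follow level by level from the bijectivity of the per-round maps. The delicate points — and the crux of the argument — are reconciling the \emph{global} nature of an isomorphism with the \emph{stage-wise, history-indexed} bijections of the strategy, and checking that the winning condition being a \emph{partial isomorphism} (preservation and reflection, not mere preservation) is exactly what upgrades $\theta$ from a forest morphism to a relational isomorphism in $\RTk(\sigma)$.
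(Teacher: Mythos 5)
Your treatment of (1) and (2) is correct and coincides with what the paper intends: the survey does not actually spell out a proof of Theorem~\ref{EFthm}, but it supplies exactly the ingredients you invoke --- Theorem~\ref{EFgamethm} plus the adjunction bijection $\CS(\Ek \As, \Bs) \cong \RTk(\sg)(G_k \As, G_k \Bs)$ for (1), and the unnumbered theorem identifying bisimilarity of $G_k \As$ and $G_k \Bs$ with the $k$-round Ehrenfeucht-\Fraisse~game, composed with the classical Ehrenfeucht-\Fraisse~theorem, for (2) --- and Section~\ref{arbsec} confirms that the intended proof is precisely to match the abstract equivalences with the concrete games and then quote the logical game characterisations. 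Your route for (3), via the bijective Ehrenfeucht-\Fraisse~game, is likewise the route taken in the source the paper cites \cite{abramsky2021relating}.

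However, your argument for (3) has a genuine gap in the direction from an isomorphism $\theta \colon G_k \As \to G_k \Bs$ to a winning bijective strategy, and it is not the point you flagged. You assert that the partial-isomorphism winning condition is ``guaranteed because $\theta$ both preserves and reflects each $R^{G_k}$''. But a partial isomorphism must also respect \emph{equality}: the correspondence $a_i \mapsto b_i$ has to be functional and injective when Spoiler repeats an element. The lifted relations $R^{G_k \As}$ record only the $\sg$-atoms holding among the (pairwise comparable) entries of a play; they record nothing about which entries are occupied by the same element of $A$. Consequently an isomorphism in $\RTk(\sg)$ may map a repeating play such as $[a,a]$ to a play $[b,b']$ with $b \neq b'$, provided $b'$ behaves as a relational twin of $b$; the stage-wise bijections you extract then violate the partial-isomorphism condition the moment Spoiler replays $a$, even though all $\sg$-relations are preserved and reflected. (Your converse direction, gluing a winning strategy into an isomorphism level by level, is sound, since the winning condition with equality is stronger than what that direction needs.) This is a known subtlety in the comonadic treatment of counting quantifiers, and the standard repairs are either to expand the vocabulary with a relation symbol $I$ interpreted as equality, so that repetitions become visible to the lifted relations, or to argue that equality atoms can be eliminated from counting logic without increasing quantifier rank (using the quantifiers $\egen$ themselves), so that the equality-free equivalence your construction actually certifies coincides with $\eqLck$. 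As written, your proof of (3) establishes only the equality-free version and needs one of these ingredients to close.
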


\subsection*{Coalgebra number and tree-depth}

Another fundamental aspect of comonads is that they have an associated notion of \emph{coalgebra}.
To explain this, we need to recall the detailed definition of a comonad $(G, \varepsilon, \delta)$ on a category $\CC$. As well as an endofunctor $G : \CC \to \CC$, we have a natural transformation $\varepsilon_A : G A \to A$ (the counit), and a natural transformation $\delta_A : GA \to G G A$ (the comultiplication). These are required to make the following diagrams commute:
\begin{center}
\begin{tikzcd}
G A \ar[r, "\delta_A"]  \ar[d, "\delta_A"']
& G G A \ar[d, "G \delta_A"] \\
G G A \ar[r, "\delta_{G A}"]
& G G G A
\end{tikzcd}
$\qquad \qquad$
\begin{tikzcd}
G A \ar[r, "\delta_A"]   \ar[rd, "\id_{GA}"] \ar[d, "\delta_A"']
& G G A \ar[d, "G \epsA"] \\
G G A \ar[r, "\varepsilon_{GA}"']
& G A
\end{tikzcd}
\end{center}

A coalgebra for a comonad $(G, \varepsilon, \delta)$ is a morphism $\alpha : A \to G A$ such that the following diagrams commute:
\begin{center}
\begin{tikzcd}
A \ar[r, "\alpha"] \ar[rd, "\id_A"']
& G A \ar[d, "\ve_{A}"] \\
& A
\end{tikzcd}
$\qquad \qquad$
\begin{tikzcd}
A  \ar[r, "\alpha"] \ar[d, "\alpha"']
& G A \ar[d,  "\delta_{A}"] \\
G A  \ar[r, "G \alpha"]
& G^2 A
\end{tikzcd}
\end{center}

Note that whereas a homomorphism $\Ek A \to B$ is generally \emph{easier} to construct than a homomorphism $A \to B$ (fewer conditions to check for the domain), a homomorphism $A \to \Ek B$ will be \emph{harder} (fewer available properties of the codomain to show that relations are preserved).
We should only expect a coalgebra structure $\alpha : \As \to \Ek \As$ to exist when the $k$-local
information on $\As$ is sufficient to determine the structure of $\As$.

\medskip
Our use of indexed comonads $\Ck$ opens up a new kind of question for coalgebras. Given a structure $\As$, we can ask: what is the least value of $k$ such that a $\Ck$-coalgebra exists on $\As$?
We call this the \emph{coalgebra number} of $\As$.

\begin{theorem}
For the Ehrenfeucht-\Fraisse~comonad $\Ek$, the coalgebra number of $\As$ corresponds precisely to the \emph{tree-depth} of $\As$.
\end{theorem}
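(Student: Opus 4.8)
The plan is to establish a bijective correspondence between $\Ek$-coalgebras on a structure $\As$ and forest orders on $A$ of height $\leq k$ satisfying condition (E). The theorem then follows immediately: the coalgebra number is by definition the least $k$ for which a $\Ek$-coalgebra exists, while the tree-depth is by definition the least height of such a forest order, so the correspondence makes these two numbers coincide.

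First I would unpack the two coalgebra laws for a candidate structure map $\alpha : \As \to \Ek \As$. Writing $\alpha(a) = [a_1, \ldots, a_n]$, the counit law $\varepsilon_{\As} \circ \alpha = \id_{\As}$ forces $a_n = a$, since $\varepsilon_{\As}$ reads off the last entry of a sequence; in particular $\alpha$ is injective. For the comultiplication law $\delta_{\As} \circ \alpha = \Ek(\alpha) \circ \alpha$ I would pin down the two actions explicitly: $\delta_{\As}$ sends a sequence to the list of its prefixes, $[a_1,\ldots,a_n] \mapsto [[a_1],[a_1,a_2],\ldots,[a_1,\ldots,a_n]]$, while $\Ek(\alpha)$ acts componentwise, $[a_1,\ldots,a_n]\mapsto[\alpha(a_1),\ldots,\alpha(a_n)]$. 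Comparing the two sides entry by entry yields the crucial identity $\alpha(a_i) = [a_1, \ldots, a_i]$ for every $i$: the image of $\alpha$ is prefix-closed, each proper prefix of a value of $\alpha$ again being a value of $\alpha$.

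From this I would read off the forest order by declaring $a \leq b$ iff $\alpha(a) \preford \alpha(b)$. The prefix-closure identity shows that the set of predecessors of $b$ is exactly the set of entries of $\alpha(b)$, listed along a covering chain $a_1 \cvr \cdots \cvr a_n = b$ of length $\leq k$; hence $\leq$ is a forest order of height at most $k$. Condition (E) then follows from $\alpha$ being a homomorphism: if $a,b$ are Gaifman-adjacent they occur together in some relation tuple, so applying $\alpha$ lands that tuple in $R^{\Ek \As}$, and the definition of the lifted relation forces the components $\alpha(a),\alpha(b)$ to be prefix-comparable, i.e. $a$ and $b$ comparable. Conversely, given any forest order of height $\leq k$ satisfying (E), I would set $\alpha(a) := [a_1, \ldots, a_n]$ for the covering chain of predecessors of $a$; the counit and comultiplication laws are then immediate from the construction, and the homomorphism property is precisely what (E) delivers, since any relation tuple lies along a single chain and so maps into $R^{\Ek \As}$.

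The main obstacle, and the step that repays careful bookkeeping, is the faithful translation of the comultiplication law into the prefix-closure identity $\alpha(a_i) = [a_1, \ldots, a_i]$, which rests on computing the comonad data $\delta_{\As}$ and the functorial action of $\Ek$ on $\alpha$ correctly; this is exactly what rules out homomorphisms that satisfy the counit law alone but fail to define a coherent order. Once both directions are verified to be mutually inverse, the least $k$ admitting a coalgebra equals the least height of an (E)-respecting forest order, which is the tree-depth of $\As$.
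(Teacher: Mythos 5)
Your proposal is correct and takes essentially the same route as the paper: the paper deduces the theorem from the isomorphism between the category of $\Ek$-coalgebras and $\RTk(\sg)$, and your argument is precisely the object-level content of that isomorphism, namely the bijection between coalgebras $\alpha : \As \to \Ek \As$ and forest orders on $A$ of height $\leq k$ satisfying condition (E). Your unpacking of the counit and comultiplication laws into the prefix-closure identity $\alpha(a_i) = [a_1, \ldots, a_i]$, together with the back-and-forth passage to (E)-respecting forest orders, simply supplies the details that the paper delegates to the cited comonadicity result.
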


This follows directly from the following result.
\begin{theorem}
The category of coalgebras for $\Ek$ is isomorphic to $\RTk(\sg)$, the category of forest-ordered $\sg$-structures.\footnote{More specifically, the \emph{comparison functor} from $\RTk(\sg)$ to the category of coalgebras for $\Ek$ is an isomorphism. Thus the adjunction $U_k \dashv G_k$ is \emph{comonadic}  \cite{mac2013categories}.}
\end{theorem}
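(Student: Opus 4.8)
The plan is to prove the stronger statement in the footnote directly: I will exhibit an explicit inverse to the comparison functor $K \colon \RTk(\sg) \to \mathrm{CoAlg}(\Ek)$, rather than verify the hypotheses of the (dual) Beck comonadicity theorem. Recall that $K$ sends a forest-ordered structure $X$ to the coalgebra $(U_k X, U_k \eta_X)$, where $\eta$ is the unit of the adjunction $U_k \dashv G_k$, and sends a morphism to its underlying homomorphism. Since $K$ is the identity on underlying homomorphisms, it suffices to show that on objects it is a bijection onto the coalgebras, compatibly with the forest structure; constructing an inverse functor $L$ makes this transparent.

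First I would make the comonad structure of $\Ek$ completely explicit. The counit $\varepsilon_A \colon \Alk \to A$ is $[a_1, \ldots, a_n] \mapsto a_n$, and the comultiplication $\delta_A = U_k \eta_{G_k A}$ sends a sequence to its chain of prefixes, $\delta_A[a_1, \ldots, a_n] = [\,[a_1], [a_1, a_2], \ldots, [a_1, \ldots, a_n]\,]$, since in $G_k A$ the predecessors of $[a_1, \ldots, a_n]$ under the prefix order are exactly its nonempty prefixes. With these in hand I would decode the coalgebra laws for a map $\alpha \colon A \to \Alk$. The counit law $\varepsilon_A \circ \alpha = \id_A$ says that $\alpha(a)$ always ends in $a$; and writing $\alpha(a) = [a_1, \ldots, a_n]$, the comultiplication law $\delta_A \circ \alpha = \Ek \alpha \circ \alpha$ unwinds, using $\Ek \alpha [a_1, \ldots, a_n] = [\alpha(a_1), \ldots, \alpha(a_n)]$, to the prefix-coherence conditions $\alpha(a_i) = [a_1, \ldots, a_i]$ for all $i \leq n$. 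Together these force $\alpha$ to be injective (its last letter recovers its argument) with prefix-closed image.

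This is exactly the data of a forest order, so I would define $L$ on objects by equipping $A$ with $a \leq b \iff \alpha(a) \preford \alpha(b)$. Prefix-coherence guarantees that the predecessors of $a$ with $\alpha(a) = [a_1, \ldots, a_n]$ are precisely the chain $a_1 \cvr \cdots \cvr a_n = a$, so $\leq$ is a forest order of height $\leq k$; condition (E) falls out because $\alpha$ being a $\sg$-homomorphism forces any related tuple $\vec{a} \in R^A$ to map to a tuple $(\alpha(a_1), \ldots)$ of pairwise prefix-comparable sequences, whence the $a_i$ are pairwise $\leq$-comparable. On morphisms, a coalgebra morphism $f$ satisfies $\beta \circ f = G_k f \circ \alpha$, and since $G_k f$ acts letterwise this reads $\beta(f(a)) = [f(a_1), \ldots, f(a_n)]$, which immediately yields that $f$ preserves roots and the covering relation, i.e.\ $f$ is a morphism of $\RTk(\sg)$. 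Hence $L$ is a well-defined functor.

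Finally I would check that $L$ and $K$ are mutually inverse. The key computation is that the unit is given by $\eta_X(x) = [x_1, \ldots, x_n]$, the covering chain of predecessors of $x$ (this is the $\hat{f}$ of the couniversal property taken at $f = \id_{U_k X}$). For $L \circ K$: starting from $X$, the coalgebra $U_k \eta_X$ reads off predecessor chains, and the induced prefix order on these chains recovers the original order, since one chain is a prefix of another iff the first endpoint lies below the second. For $K \circ L$: starting from $(A, \alpha)$, the predecessor chain of $a$ in $L(A,\alpha)$ is $[a_1, \ldots, a_n] = \alpha(a)$ by prefix-coherence, so $U_k \eta_{L(A,\alpha)} = \alpha$ on the nose. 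As both functors are the identity on underlying homomorphisms, they are strictly inverse, giving the claimed isomorphism of categories. I expect the main obstacle to be the decoding of the comultiplication law: extracting prefix-coherence (and with it injectivity and prefix-closedness of the image of $\alpha$) is the crux, since this is what upgrades an arbitrary homomorphism $A \to \Alk$ satisfying the counit law into genuine forest data. The remaining verifications---that the order has height $\leq k$, that (E) is equivalent to $\alpha$ respecting the relations, and that the unit computes predecessor chains---are then routine, but must be carried out on the nose to obtain an isomorphism rather than a mere equivalence.
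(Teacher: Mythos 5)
Your proof is correct and takes essentially the same route as the source on which the paper relies (the paper itself states this theorem without proof, deferring to \cite{abramsky2021relating}): there too the coalgebra laws are decoded explicitly --- the counit law forcing $\alpha(a)$ to end in $a$, and the comultiplication law forcing the prefix-coherence $\alpha(a_i) = [a_1, \ldots , a_i]$ --- so that $\Ek$-coalgebra structures on $\As$ correspond exactly to forest orders of height $\leq k$ satisfying condition (E), with coalgebra morphisms matching forest morphisms. Your additional care in verifying that the unit computes predecessor chains, so that $K$ and $L$ are strictly inverse on the nose (yielding an isomorphism of categories rather than a mere equivalence), is precisely what the footnoted comonadicity claim requires.
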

This says that $\Ek$-coalgebras on $A$ correspond bijectively to witnesses for the tree-depth of $A$ being $\leq k$.

\section{Establishing the template}
The rich connections we found in the example of the Ehrenfeucht-\Fraisse~comonad are no accident. They instantiate a pattern which recurs throughout model theory, capturing in particular the resource-sensitive aspects important in finite model theory and descriptive complexity.

\subsection*{The pebbling comonad}\label{pebsec}
To show how the same pattern occurs in a significantly different example, we look at the \emph{pebbling comonad}.

Pebble games are similar but subtly different to EF-games.
Spoiler moves by placing one from a fixed set of $k$ pebbles on an element of $\As$ or $\Bs$; Duplicator responds by placing their matching pebble on an element of the other structure.
Duplicator wins if after each round, the relation defined by the current positions of the pebbles is a partial isomorphism.
Thus there is a ``sliding window'' on the structures, of fixed size. It is this size which bounds the resource, not the length of the play.

Whereas the $k$-round EF game corresponds to bounding the quantifier rank, $k$-pebble games correspond to bounding the number of \emph{variables} which can be used in a formula.

Just as for EF-games, there is an existential-positive version, in which Spoiler only plays in $\As$, and Duplicator responds in $\Bs$.

\subsubsection*{The pebbling adjunction}
We define a \emph{$k$-pebble forest-ordered $\sg$-structure} $(\As, {\leq}, p)$ to be a forest-ordered $\sg$-structure $(\As, {\leq})$ together with a pebbling function $p: A \to \kset$, where $\kset := \{ 1, \ldots , k \}$.
In addition to condition (E), it must also satisfy the following condition:
\begin{center}
(P) if $a$ is adjacent to $b$ in the Gaifman graph of $A$, and $a < b$ in the forest order, then for all $x \in (a, b]$, $p(a) \neq p(x)$.
\end{center}
Morphisms of these structures are morphisms of forest-ordered structures which additionally preserve the pebbling function.

\medskip
This defines a category $\RPk(\sg)$ (where $k$ bounds  the number of pebbles, rather than the height of the forest order), and there is an evident forgetful functor $V_k : \RPk(\sg) \to \CS$.

\begin{theorem}
For each $k >0$, the functor $V_k$ has a right adjoint $H_k$.
\end{theorem}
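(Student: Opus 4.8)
The plan is to construct the right adjoint $H_k$ explicitly, following the template established for the Ehrenfeucht--\Fraisse~comonad, and then to verify a couniversal property exactly as was done there for $G_k$. The construction of $H_k \As$ must now record not only the play so far but also which pebble is placed at each move. Concretely, I would take the universe of $H_k \As$ to be the set of non-empty finite sequences $[(q_1, a_1), \ldots, (q_m, a_m)]$ with each $q_i \in \kset$ and $a_i \in A$, of \emph{arbitrary} length (the resource bound is on the number of pebbles, not on the length of the play). This is forest-ordered by the prefix order $\preford$; its roots are the singleton sequences, and every sequence has only finitely many proper prefixes, so we genuinely obtain a forest. The counit $\varepsilon_A : V_k H_k \As \to \As$ sends a sequence to the $A$-component of its last entry, and the pebbling function $p$ sends a sequence to the pebble index $q_m$ of its last entry.

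The delicate point is the lifting of relations, which must be arranged so that $H_k \As$ lies in $\RPk(\sg)$ --- that is, so that it satisfies both (E) and (P). For an $n$-ary relation $R$, I would define $R^{H_k \As}$ to consist of those tuples $(s_1, \ldots, s_n)$ of pairwise $\preford$-comparable sequences such that $R^{\As}$ holds of their last $A$-components, and such that, whenever $s_i \preford s_j$ with $s_i \neq s_j$, the final pebble index of $s_i$ is not reused at any later position of $s_j$, up to and including its last entry. The comparability clause secures (E), and the non-reuse clause is precisely what is needed to secure (P); this is the one genuinely new ingredient relative to the $G_k$ construction, and matching its formulation to (P) exactly is where I expect the main care to be required.

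With $H_k \As$ in hand, I would establish the couniversal property: for any object $\TA = (T, {\leq}, p_T)$ of $\RPk(\sg)$ and any $\sg$-homomorphism $f : V_k \TA \to \As$, there is a unique morphism $\hat{f} : \TA \to H_k \As$ in $\RPk(\sg)$ with $\varepsilon_A \circ V_k \hat{f} = f$. As before, for $x \in T$ whose predecessors form the covering chain $x_1 \cvr \cdots \cvr x_m = x$ from a root, I set $\hat{f}(x) = [(p_T(x_1), f(x_1)), \ldots, (p_T(x_m), f(x_m))]$. By construction this is a forest morphism, it preserves the pebbling function (its last pebble is $p_T(x)$), and it satisfies $\varepsilon_A(\hat{f}(x)) = f(x)$, so the required triangle commutes. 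Uniqueness is forced, since any morphism lying over $\varepsilon_A$ and preserving pebbles must take exactly these values along each covering chain.

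The crux of the verification --- and the step I expect to be the main obstacle --- is showing that $\hat{f}$ preserves relations, \ie~that $\vec{x} \in R^{T}$ implies $\hat{f}(\vec{x}) \in R^{H_k \As}$. Here condition (E) for $\TA$ forces the entries of $\vec{x}$ to lie along a single chain, so their images are pairwise $\preford$-comparable; since $f$ is a homomorphism, $R^{\As}$ holds of the last components; and condition (P) for $\TA$ is exactly what guarantees the non-reuse-of-pebbles clause in the definition of $R^{H_k \As}$, since for a comparable pair $x_a < x_b$ in $\vec{x}$ the elements $y$ with $x_a < y \leq x_b$ are precisely the ones indexing the positions of $\hat{f}(x_b)$ beyond $\hat{f}(x_a)$. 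It is this precise correspondence between (P) on the domain and the pebbling clause in the lifted relation that makes the argument work, and verifying it in full is the heart of the proof. Once the couniversal property is established, the existence of the right adjoint $H_k$ --- and hence of the pebbling comonad $V_k H_k$ --- follows formally, exactly as for the adjunction $U_k \dashv G_k$.
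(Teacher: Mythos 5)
Your proposal is correct and follows essentially the same route as the paper: the paper constructs $H_k$ explicitly with universe $(\kset \times A)^{+}$, the same counit and last-pebble function, and the same relation lifting (pairwise prefix-comparability plus the condition that the last pebble index of a prefix does not recur in the suffix), and then runs the same couniversal-property script used for $U_k \dashv G_k$. Your identification of condition (P) on the domain as the exact counterpart of the non-reuse clause in $R^{H_k \As}$ is precisely the point the paper relies on when it says the Ehrenfeucht-\Fraisse~argument carries over.
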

The corresponding comonad is $\Pk$, the pebbling comonad.

\subsubsection*{The pebbling comonad}

Given a structure $\As$, the universe of $\Pk \As$ is
$(\kset \times A)^{+}$, the set of finite non-empty sequences of moves $(p, a)$. Note this will be infinite even if $\As$ is finite.
The counit map $\epsA : \Pk \As \to \As$ sends a sequence $[(p_1,a_1), \ldots , (p_n,a_n)]$ to  $a_n$.

\medskip
To lift the relations on $\As$ to $\Pk \As$ we have the following condition in addition to those for $\Ek$:
\begin{itemize}
\item If $s \preford t$, then the pebble index of the last move in $s$ does not appear in the
suffix of $s$ in $t$.
\end{itemize}

We can now run exactly the same script as for the Ehrenfeucht-\Fraisse~case:
\begin{itemize}
\itemsep=0.85pt
\item We can define paths, pathwise embeddings, open maps, bisimilarity in $\RPk(\sg)$ in exactly the same fashion as we did for $\RTk(\sg)$.
\item Hence we can define bisimulations between object of the extensional category $\CS$ using the resource category $\RPk(\sg)$.
\item We can define the equivalence relations  $\As \eqaHk \Bs$,  $\As \eqbHk \Bs$,  $\As \eqcHk \Bs$ with respect to $\RPk(\sg)$.
\end{itemize}

We now take $\Lk$ to be the $k$-variable fragment of first-order logic. $\ELk$ is the existential-positive part of this fragment, $\Lck$ the  extension of $\Lk$ with counting quantifiers.
With this notation, we get  the same result as Theorem~\ref{EFthm}:
\begin{theorem}\label{pebblethm}
For structures $\As$ and $\Bs$:
\begin{flushleft}
\begin{tabular}{llcl}
(1) & $\As \eqELk \Bs$ & $\; \IFF \;$ & $\As \eqaHk \Bs$. \\
(2) & $\As \eqLk \Bs$ & $\; \IFF \;$ & $\As \eqbHk \Bs$. \\
(3) & $\As \eqLck \Bs$ & $\; \IFF \;$ & $\As \eqcHk \Bs$.
\end{tabular}
\end{flushleft}
\end{theorem}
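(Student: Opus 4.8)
The plan is to reproduce, \emph{mutatis mutandis}, the proof of Theorem~\ref{EFthm}, replacing the Ehrenfeucht-\Fraisse{} data $(\Ek, \RTk(\sg), G_k, U_k)$ by the pebbling data $(\Pk, \RPk(\sg), H_k, V_k)$, the quantifier-rank fragments by the $k$-variable fragments, and the $k$-round EF game by the $k$-pebble game. The single structural ingredient that drives all three parts is the adjunction bijection coming from $V_k \dashv H_k$: for $X \in \RPk(\sg)$ and $B \in \CS$ we have $\CS(V_k X, B) \cong \RPk(\sg)(X, H_k B)$, so taking $X = H_k \As$, $B = \Bs$ and using $\Pk = V_k H_k$ yields a bijection between coKleisli maps $\Pk \As \rarr \Bs$ and resource-category morphisms $H_k \As \rarr H_k \Bs$. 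Each of the three pebbling equivalences on the right-hand side can thus be transported to a statement about $\Pk$ and its coalgebras, exactly as in the EF case.

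For part (1), the bijection above shows that $\As \eqaHk \Bs$ — the existence of morphisms $H_k \As \rarr H_k \Bs$ and $H_k \Bs \rarr H_k \As$, with no relationship required between them — is equivalent to the existence of coKleisli maps $\Pk \As \rarr \Bs$ and $\Pk \Bs \rarr \As$, each direction independently. I would then invoke the pebbling analogue of Theorem~\ref{EFgamethm}: a coKleisli map $\Pk \As \rarr \Bs$ exists iff Duplicator wins the existential $k$-pebble game played from $\As$ to $\Bs$ iff every existential positive $k$-variable sentence true in $\As$ is true in $\Bs$. Since existential positive sentences are preserved by homomorphisms, combining the two directions gives precisely $\As \eqELk \Bs$. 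The proof of this analogue transcribes the EF one: an element of the universe $(\kset \times A)^{+}$ of $\Pk \As$ is a play of the $k$-pebble game, and condition (P) together with the suffix condition on lifted relations faithfully encode the sliding-window reuse of pebbles, so that a homomorphism out of $\Pk \As$ is exactly a Duplicator strategy whose winning condition is a partial homomorphism.

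For parts (2) and (3) I would work inside $\RPk(\sg)$, where — as the excerpt notes — paths, path embeddings, open maps and bisimulations are defined exactly as in $\RTk(\sg)$. For (2), I would prove the pebbling analogue of the bisimulation theorem: $H_k \As$ and $H_k \Bs$ are bisimilar in $\RPk(\sg)$ iff Duplicator wins the full back-and-forth $k$-pebble game between $\As$ and $\Bs$. The forth direction is carried by the pathwise-embedding condition (positions along a play induce partial isomorphisms), the back direction by the open-map path-lifting property (Duplicator matches any new pebble placement by Spoiler), and the span $R$ of open pathwise embeddings packages a winning strategy. The classical characterisation of $k$-variable equivalence by the $k$-pebble game \cite{ebbinghaus2005finite} then gives $\As \eqbHk \Bs \IFF \As \eqLk \Bs$. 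For (3), I would show that an isomorphism $H_k \As \cong H_k \Bs$ in $\RPk(\sg)$ is equivalent to a winning Duplicator strategy in the \emph{bijective} $k$-pebble game, whose classical characterisation yields equivalence in $k$-variable logic with counting, i.e. $\As \eqLck \Bs$.

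The main obstacle, I expect, is part (3): extracting from an $\RPk(\sg)$-isomorphism of the cofree structures the level-by-level \emph{bijections} between successor sets that constitute a bijective-pebble-game strategy, and conversely assembling such bijections into a single isomorphism of the (in general infinite) structures $H_k \As$ and $H_k \Bs$. Unlike the homomorphism and bisimulation cases, where one argues locally and existentially, here one must match cardinalities of the fibres over each node of the pebble forest and control the interaction of these counts with condition (P) governing pebble reuse. The arguments for (1) and (2) are essentially a transcription of the EF proofs; it is in (3) that the specifically pebble-theoretic bookkeeping, and the step from local bijections to a global isomorphism, demand the most care.
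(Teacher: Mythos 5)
Your proposal is correct and takes essentially the same route as the paper, which presents this theorem by ``running exactly the same script'' as for the Ehrenfeucht-\Fraisse{} case: using the adjunction $V_k \dashv H_k$ to identify coKleisli maps $\Pk \As \to \Bs$ with morphisms $H_k \As \to H_k \Bs$, reading coKleisli maps as existential $k$-pebble-game strategies, capturing the back-and-forth game by spans of open pathwise embeddings in $\RPk(\sg)$, and matching isomorphism of the cofree structures with the bijective $k$-pebble game for the counting case. Your flagging of part (3) as the delicate step is also consistent with the paper's remark that ``a slightly more subtle argument is needed'' in the pebbling setting.
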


\subsubsection*{Coalgebra number and tree-width}

We can define the coalgebra number for the pebbling comonad exactly as done before for the Ehrenfeucht-\Fraisse~comonad.

A slightly more subtle argument is needed to show:
\begin{theorem}
For the pebbling comonad $\Pk$, the coalgebra number of $\As$ corresponds precisely to the \emph{tree-width} of $\As$.\footnote{Strictly speaking, to $\mbox{treewidth} +1$, since by convention $1$ is subtracted from the ``natural'' measure in defining treewidth.}
\end{theorem}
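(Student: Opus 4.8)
The plan is to factor the statement through a coalgebra/decomposition correspondence, exactly paralleling the tree-depth case. First I would prove the pebbling analogue of the comonadicity theorem: the category of $\Pk$-coalgebras is isomorphic to $\RPk(\sg)$, so that a $\Pk$-coalgebra structure $\alpha \colon \As \to \Pk \As$ amounts precisely to a $k$-pebble forest order $({\leq}, p)$ on the underlying structure $\As$ satisfying conditions (E) and (P). The argument mirrors the one for $\Ek$: given $\alpha$, the counit law forces $\varepsilon_{\As} \circ \alpha = \id$, so $\alpha$ assigns to each $a$ a sequence of moves ending in $a$; the comultiplication law forces these sequences to cohere into a prefix-closed family, which is exactly a forest order together with the pebbling data read off from the pebble indices. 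Conversely a $k$-pebble forest order determines a coalgebra by the couniversal property of $H_k$. Granting this, the coalgebra number of $\As$ is the least $k$ for which such a pebble forest order exists.

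The heart of the proof is then a purely combinatorial equivalence: a $k$-pebble forest order on $\As$ exists if and only if $\As$ has a tree decomposition of width $\leq k-1$ (whence the $+1$ in the footnote, since $k$ pebble values bound the bag size by $k$). For the ``decomposition $\Rightarrow$ coalgebra'' direction, given a tree decomposition I would traverse its tree from the root, maintaining the invariant that the at most $k$ vertices in the current bag occupy distinct pebble slots; introducing the vertices of each bag in order along the branches yields a forest order on $A$, and assigning to each vertex the slot it occupies gives the pebbling function $p$. Condition (E) then follows because any relational tuple is contained in a single bag, so its elements are introduced along one branch and are comparable; condition (P) follows from the coherence (connectedness) axiom of tree decompositions, which guarantees a vertex retains its slot throughout the interval in which it remains in play.

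For the converse ``coalgebra $\Rightarrow$ decomposition'' direction --- the subtle one the statement alludes to --- I would build a tree decomposition whose tree is the forest order itself (adjoining a formal root if needed), and whose bag at a node $x$ is the \emph{pebble profile} at $x$: for each index $i \in \kset$, the deepest element $y \leq x$ with $p(y) = i$. Each bag then has at most $k$ elements, giving width $\leq k - 1$. The vertex-cover axiom holds since $x$ lies in its own bag; the edge-cover axiom uses (E), which places the elements of any tuple on a single chain, together with the definition of $R^{\Pk \As}$ ensuring their pebbles survive to a common node; and the connectivity axiom is exactly where (P) is needed, since (P) prevents a pebble index from being overwritten between two comparable adjacent elements, keeping the set of nodes whose bag contains a given vertex $a$ a connected subtree rooted at $a$.

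I expect the main obstacle to be precisely this connectivity axiom. One must show that once a vertex $a$ is evicted from the pebble profile --- its slot $p(a)$ being reassigned to some deeper element --- it never reappears in any bag further down, and that no relational tuple is ``missed'' because one of its elements was evicted too early. This is exactly what (P) controls, and the verification requires tracking, for each branch, the precise node at which each pebble index is last used; getting this bookkeeping to line up with the tree-decomposition axioms is the delicate point, and is what distinguishes the argument from the cleaner tree-depth case, where bags are simply the covering chains and no eviction occurs.
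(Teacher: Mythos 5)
Your proposal is correct and follows essentially the route the paper intends: the paper states this theorem without proof, deferring to \cite{abramsky2017pebbling,abramsky2021relating}, and those references argue exactly as you do --- first the comonadicity theorem identifying $\Pk$-coalgebras on $\As$ with $k$-pebble forest orders $({\leq},p)$ on $\As$ (i.e.\ objects of $\RPk(\sg)$ over $\As$), then the ``slightly more subtle'' combinatorial equivalence between such orders and tree decompositions of width at most $k-1$, using precisely your two constructions (traversal of the decomposition tree with slot reuse in one direction, pebble-profile bags over the forest order in the other). One correction to your verification in the coalgebra-to-decomposition direction: condition (P) is what makes the \emph{edge-cover} axiom work --- it guarantees that when elements of a tuple lie on a chain with top element $b$, the pebble of each lower element $a$ is not overwritten in $(a,b]$, so the whole tuple sits in the profile bag at $b$ --- whereas the \emph{connectivity} axiom is automatic from your definition of the bags: the set of nodes whose profile contains $a$ is, by construction, the set of $x \geq a$ such that $p(a)$ is not reused in $(a,x]$, which is a connected subtree rooted at $a$ with no appeal to (P). Consequently the ``delicate point'' you anticipate in your last paragraph is not where the work lies; once bags are defined as pebble profiles, eviction-and-reappearance cannot occur by definition, and the only place the pebbling discipline is genuinely invoked is edge-cover.
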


\subsection*{The modal comonad}\label{modalsec}

We briefly summarize another important example, for basic modal logic. This exemplifies both \emph{variation}, since it has significantly different properties to the Ehrenfeucht-\Fraisse~and pebbling cases, and also the underlying structural \emph{uniformity}, since again we can run exactly the same script.
\begin{itemize}
\itemsep=0.9pt
\item In this case, the modal comonad $\Mk$ corresponds to $k$-level \emph{unravelling} of a Kripke structure.
\item Open pathwise embedding bisimulation recovers standard modal bisimulation.
\item The logical equivalences are the modal versions of those previously considered:
\begin{itemize}
\item full modal logic of depth $\leq k$,
\item the diamond-only positive fragment, and
\item \emph{graded modal logic} \cite{deRijke2000} for the counting case.
\end{itemize}
\item The coalgebra number in this case recovers the \emph{property} of being a synchronization tree of height $\leq k$.
\item The fact that it is a property rather than a structure in this case follows from the fact that this comonad is \emph{idempotent}, and hence corresponds to a coreflective subcategory (the tree-structured models as a full sub-category of the category of all Kripke structures).
\end{itemize}

\begin{figure}[!b]
\centering
\scalebox{0.95}{
     \begin{tabular}{|p{5em}|p{10em}|p{7em}|l|l|l|}
  \hline
  $\Ck$
  & \textbf{Logic}
  & \textbf{$\kappa^{\mathbb{C}}$}
  & \textbf{$\oarr{k}{\mathbb{C}}$}
  & \textbf{$\tarr{k}{\mathbb{C}}$}
  & \textbf{$\carr{k}{\mathbb{C}}$}
  \\
  \hline
  $\mathbb{E}_k$
  & $\FOL$ w/ $\textsf{qr} \leq k$
  & tree-depth
  & $\checkmark$
  & $\checkmark$
  & $\checkmark$
  \\
  \hline
  $\mathbb{P}_k$
  & $k$-variable logic
  & treewidth $+1$
  & $\checkmark$
  & $\checkmark$
  & $\checkmark$
  \\
  \hline
  $\mathbb{M}_k$
  & $\textbf{ML}$ w/ $\textsf{md} \leq k$
  & sync. tree-depth
  & $\checkmark$
  & $\checkmark$
  & $\checkmark$
  \\
  \hline
  $\mathbb{G}^{\mathfrak{g}}_k$
  & $\mathfrak{g}$-guarded logic w/ width $\leq k$
  & guarded treewidth
  & $\checkmark$
  & $\checkmark$
  & ?
  \\
  \hline
  $\mathbb{H}_{n,k}$
  & $k$-variable logic w/ $\textbf{Q}_{n}$-quantifiers
  & $n$-ary general treewidth
  & $\checkmark$
  & $\checkmark$
  & $\checkmark$
  \\
  \hline
    $\mathbb{PR}_k$
  & $k$-variable logic restricted-$\wedge$
  & pathwidth $+1$
  & $\checkmark$
  & ?
  &  $\checkmark$
  \\
  \hline
\end{tabular} }
\caption{Summary table}
\end{figure}

\subsection*{Examples galore}

Subsequent work has made detailed studies of
a considerable number of examples of game comonads corresponding to various logic fragments and corresponding model comparison games:
\begin{itemize}
\itemsep=0.9pt
\item pebbling games and finite variable fragments  \cite{abramsky2017pebbling}
\item Ehrenfeucht-\Fraisse~games and quantifier rank fragments  \cite{abramsky2021relating}
\item basic modal logic and modal depth fragments  \cite{abramsky2021relating}
\item guarded quantifier fragments (atom, loose and clique guards)  \cite{abramsky2021comonadic}
\item generalized quantifier fragments \cite{conghaile2020game}
\item hybrid and bounded fragments \cite{abramsky2021hybrid}
\item bounded conjunction finite variable logic (motivated by pathwidth) \cite{montacute2021pebble}
\end{itemize}

In each case, we have tight connections with logical fragments, and with combinatorial invariants, following exactly the same pattern we have already seen.
We get direct descriptions of the coalgebras in terms of \emph{comonadic forgetful functors}. These are important both for formulating bisimulation, and for the connection with combinatorial invariants.

The situation is summarized in Figure~1.

\section{Arboreal categories}\label{arbsec}

Arboreal categories and arboreal covers  provide an axiomatic framework which can be instantiated to yield the wide range of examples described in the previous section.
More broadly, they provide an axiomatic setting for model theory in general, with particular support for fine-grained analysis of resource-bounded aspects and combinatorial invariants.

In this section, we provide a brief overview of these notions, minimising technical details while emphasising the main concepts and results. For details, see \cite{abramsky2021arboreal}.

The key notion in arboreal categories is that of \emph{path}. This can be formulated in any category $\CC$ equipped with a reasonable factorization system.\footnote{More precisely, a stable proper factorization system \cite{abramsky2021arboreal}.} We refer to the monomorphisms in this factorization system as \emph{embeddings}. If $X$ is an object of $\CC$, we write $\Emb X$ for the poset of subobjects determined by embeddings.

\begin{definition}
An object $X$ of $\C$ is called a \emph{path} if the poset $\Emb{X}$ is a finite chain.
\end{definition}
Paths will be denoted by $P,Q,R,\ldots$.
A \emph{path embedding} is an embedding $P\emb X$ whose domain is a path.
Given any object $X$ of $\C$, we let $\Path{X}$ be the sub-poset of $\Emb{X}$ determined by the path embeddings.

We say that a category $\CC$ equipped with a factorization system is a \emph{path category} if it has coproducts of families of paths, as well as satisfying an additional technical condition (``2  out of 3 condition'').
It follows from the axioms that, for any object $X$ in a path category, $\Path{X}$ is a tree. (We allow the empty path, given by the initial object (the empty coproduct of paths), which forms the root.)

We can define open pathwise embeddings and bisimulation in any path category.
To show that these notions have the expected properties, we need additional axioms, motivated by the fact that trees are the colimits of their branches and the embeddings between them.
We say that an object $X$ in a path category is \emph{path-generated} if it is the colimit of its path embeddings.

An \emph{arboreal category} is a path category in which every object is path-generated, and moreover paths are \emph{connected} in the sense that any arrow $P \to \coprod_i P_i$ factors through one of the coproduct injections $P_i \to \coprod_i P_i$.

These axioms are sufficient to allow operational or dynamic notions such as games to be defined, and shown to be equivalent to the open pathwise embedding notion of bisimulation.

\subsection{Back-and-forth games}
Let $\C$ be an arboreal category and let $X,Y$ be any two objects of $\C$. We define a back-and-forth game $\G(X,Y)$ played by Spoiler and Duplicator on $X$ and $Y$ as follows.
Positions in the game are pairs of (equivalence classes of) path embeddings $(m,n)\in\Path{X}\times\Path{Y}$.
The winning relation $\W(X,Y)\subseteq \Path{X}\times\Path{Y}$ consists of the pairs $(m,n)$ such that $\dom(m)\cong\dom(n)$.

Let $\bot_X\colon P\emb X$ and $\bot_Y\colon Q\emb Y$ be the roots of $\Path{X}$ and $\Path{Y}$, respectively. If $P\not\cong Q$, then Duplicator loses the game. Otherwise, the initial position is $(\bot_X,\bot_Y)$.
At the start of each round, the position is specified by a pair $(m,n)\in\Path{X}\times\Path{Y}$, and the round proceeds as follows: Either Spoiler chooses some $m'\succ m$ and Duplicator must respond with some $n'\succ n$, or Spoiler chooses some $n''\succ n$ and Duplicator must respond with $m''\succ m$. Duplicator wins the round if they are able to respond and the new position is in $\W(X,Y)$. Duplicator wins the game if they have a strategy which is winning after $t$ rounds, for all $t\geq 0$.

\begin{example}
It is shown in \cite{abramsky2021relating} that the abstract game $\G(X,Y)$ specialises, in the case of the arboreal categories $\RTk(\sg)$, $\RPk(\sg)$ and $\RMk(\sg)$, to the usual $k$-round Ehrenfeucht-\Fraisse, $k$-pebble and $k$-round bisimulation games, respectively.
\end{example}

\begin{theorem}\label{th:gamesiffbisim}
Let $X$ and $Y$ be any two objects of an arboreal category such that the product $X \times Y$ exists. Then $X$ and $Y$ are bisimilar if and only if Duplicator has a winning strategy in the game $\G(X,Y)$.
\end{theorem}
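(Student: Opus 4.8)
The plan is to prove the two implications separately, viewing a bisimulation span and a winning strategy as two encodings of the same ``matching'' data. Throughout I use that a bisimulation is a span $X \xleftarrow{f} R \xrightarrow{g} Y$ of open pathwise embeddings, that every object is path-generated with connected paths, and that the product $X \times Y$ is available.

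For the direction \emph{bisimilarity $\Rightarrow$ Duplicator wins}, I would build Duplicator's strategy by maintaining the invariant that after each round the position $(m,n) \in \Path{X} \times \Path{Y}$ is \emph{realised} by a path embedding $r \colon P \hookrightarrow R$, in the sense that $m = f \circ r$ and $n = g \circ r$. This holds initially, since the unique map out of the root path forces $f \circ \bot_R = \bot_X$ and $g \circ \bot_R = \bot_Y$, and it immediately secures the winning condition: because $f$ and $g$ are pathwise embeddings, $m$ and $n$ are path embeddings with common domain $\dom(r) = P$, so $\dom(m) \cong \dom(n)$ and $(m,n) \in \W(X,Y)$. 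To simulate a round in which Spoiler extends on the $X$-side to some $m' \succ m$, I would form the commutative square whose top edge is the embedding $P \hookrightarrow \dom(m')$ witnessing $m \leq m'$, whose left edge is $r$, and whose bottom edge is $f$; openness of $f$ supplies a diagonal filler $r' \colon \dom(m') \hookrightarrow R$ with $f \circ r' = m'$ and $r' \geq r$. Setting $n' := g \circ r'$ then yields Duplicator's response $n' \succ n$, and the invariant is restored. The case where Spoiler plays in $Y$ is symmetric, using openness of $g$. Since Duplicator can always answer, the resulting strategy wins after $t$ rounds for every $t$.

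For the converse, \emph{Duplicator wins $\Rightarrow$ bisimilarity}, the idea is to turn a winning strategy into an object realising the matching, most naturally as a subobject of $X \times Y$ (this is where the product hypothesis is used). From the strategy I would extract the tree $\mathcal{T}$ of strategy-consistent plays; each play ends in a winning position $(m,n)$ together with an isomorphism $\theta \colon \dom(m) \cong \dom(n)$, and pairing gives a single map $\langle m, n \circ \theta\rangle \colon \dom(m) \to X \times Y$, which is an embedding since its $X$-projection $m$ is. These embeddings form a diagram indexed by $\mathcal{T}$, and I would take $R$ to be its colimit, realised as a subobject of $X \times Y$; the two projections then restrict to maps $f \colon R \to X$ and $g \colon R \to Y$. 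It remains to check that $f$ and $g$ are pathwise embeddings --- which follows because, by path-generation and connectedness, the path embeddings of $R$ are exactly the branches of $\mathcal{T}$, whose $f$- and $g$-images are the $m$'s and the $n$'s --- and that they are \emph{open}: here a path-lifting square for $f$ presents Spoiler extending a path in $X$, and Duplicator's strategy response is precisely the required diagonal filler into $R$.

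The main obstacle is the \emph{coherence} of the per-position isomorphisms $\theta$ in the converse. Assembling the plays into a functor $\mathcal{T} \to \C$ (equivalently, realising the maps $\langle m, n\circ\theta\rangle$ as a directed family of subobjects of $X \times Y$) requires that for each extension of plays the two induced embeddings between the path-domains agree, i.e.\ that the $X$-side inclusion and the $\theta$-transported $Y$-side inclusion coincide. This is not automatic from the bare strategy, and I expect it to be the crux of the argument. The way through is the \emph{rigidity} of paths: since $\Emb{P}$ is a finite chain for every path $P$, there is at most one embedding between any two paths, so all the comparison maps collapse to a single one and the family is automatically coherent. Granting this rigidity lemma, the colimit is well defined, and the verifications of openness and of the pathwise-embedding property reduce to direct translations of the ``Duplicator can always respond'' clause of the winning condition.
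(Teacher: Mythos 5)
You should first know that the survey you are working from does not actually prove this theorem: it is quoted from the arboreal categories paper \cite{abramsky2021arboreal}, to which all details are deferred, so the comparison below is against the argument given there. Your first direction (bisimilarity implies a winning strategy) is correct and is essentially the argument of the reference: positions are realised by path embeddings into the apex $R$ of the span, Spoiler's move is lifted through the open map on one leg, and the response is obtained by pushing the filler along the other leg, which is a pathwise embedding. Your converse also follows the reference's overall plan of realising the strategy as a subobject of $X \times Y$, and you are right that the coherence of the isomorphisms $\theta$ is the crux. (Two smaller points you wave at also need real proofs: that $\langle m, n\theta\rangle$ is an embedding uses that split monos lie in $M$ and that $M$ is stable under pullback, so that $\langle m,n\theta\rangle = (m \times \mathrm{id}) \circ \langle \mathrm{id}, n\theta\rangle$ is a composite of embeddings; and identifying the path embeddings of $R$ with the branches of $\mathcal{T}$ is a nontrivial lemma about paths in a join of a down-closed family of path embeddings, using connectedness and stability.)

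The genuine gap is the ``rigidity lemma'' you use to resolve the coherence problem. From the fact that $\Emb{Q}$ is a finite chain you can indeed conclude that any two embeddings $e_1, e_2 \colon P \emb Q$ between paths represent the \emph{same subobject}: they are comparable, and an embedding of a path into a path with isomorphic domain is an isomorphism (an injective monotone self-map of the finite chain $\Emb{P}$ must be the identity). But equality of subobjects only gives $e_1 = e_2 \phi$ for some automorphism $\phi$ of $P$; it does not give $e_1 = e_2$ as morphisms. Your rigidity claim is therefore equivalent to the assertion that every path has a trivial automorphism group, and this does not follow from the arboreal axioms: the axioms (factorization system, finite subobject chains, coproducts of paths, path-generation, connectedness) are all conditions on the $E$/$M$ classes and on subobject posets, which are blind to automorphisms. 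It holds in the concrete categories $\RTk(\sg)$, $\RPk(\sg)$, $\RMk$ only because morphisms there must preserve roots and the covering relation, which forces every morphism between paths pointwise --- but the theorem is stated for an arbitrary arboreal category, so you cannot appeal to this. Consequently your diagram over $\mathcal{T}$ need not be coherent, and the colimit construction is not justified as it stands. The reference avoids relying on rigidity by making the isomorphisms part of the data: a winning strategy is first converted into a back-and-forth system in which each position carries a compatible isomorphism between the path domains (equivalently, is realised as a path embedding into $X \times Y$ whose two projections are path embeddings), with the extension clauses formulated so that the new isomorphism restricts to the old one; the bisimulation is then obtained as a join of these subobjects of $X \times Y$ rather than as a colimit of a tree-indexed diagram. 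To repair your proof you would need either to establish this compatible-extension property directly, or to add rigidity of paths as a hypothesis, which would weaken the theorem.
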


\subsection{Arboreal covers}

We now return to the underlying motivation for the axiomatic development. Arboreal categories have a rich intrinsic process structure, which allows ``dynamic'' notions such as bisimulation and back-and-forth games, and resource notions such as the height of a tree, to be defined. A key idea is to relate these process notions to extensional, or ``static'' structures. In particular, much of finite model theory and descriptive complexity can be seen in this way.

In the general setting, we have an arboreal category $\C$, and another category $\E$, which we think of as the extensional category.
\begin{definition}\label{arbcovdef}
An \emph{arboreal cover} of $\E$ by $\C$ is given by a comonadic adjunction
\[ \begin{tikzcd}
\C \arrow[r, bend left=25, ""{name=U, below}, "L"{above}]
\arrow[r, leftarrow, bend right=25, ""{name=D}, "R"{below}]
& \E.
\arrow[phantom, "\textnormal{\footnotesize{$\bot$}}", from=U, to=D]
\end{tikzcd}
\]
\end{definition}
As for any adjunction, this induces a comonad on $\E$. The comonad is $(G, \ve, \delta)$, where $G \coloneqq LR$, $\ve$ is the counit of the adjunction, and $\delta_a\colon LRa \to LRLRa$ is given by $\delta_a \coloneqq L(\eta_{Ra})$, with $\eta$ the unit of the adjunction.
The comonadicity condition states that the Eilenberg-Moore category of coalgebras for this comonad is isomorphic to $\C$.
The idea is then that we can use the arboreal category $\C$, with its rich process structure and all the associated notions, to study the extensional category $\E$ via the adjunction.

We now bring resources into the picture.
\begin{definition}\label{resarbdef}
Let $\C$ be an arboreal category, with full subcategory of paths $\Cp$. We say that $\C$ is \emph{resource-indexed} by a resource parameter $k$ if for all $k \geq 0$, there is a full subcategory $\Cp^k$ of $\Cp$ closed under embeddings\footnote{That is, for any embedding $P\emb Q$ in $\C$ with $P,Q$ paths, if $Q\in \Cp^k$ then also $P\in \Cp^k$.} with
\[ \Cp^0 \into \Cp^1 \into \Cp^2 \into \cdots \]
This induces a corresponding tower of full subcategories $\C_k$ of $\C$, with the objects of $\C_k$ those which are the colimit  of their path embeddings with domain in $\Cp^k$.
\end{definition}

\begin{example}
One resource parameter which is always available is to take $\Cp^k$ to be given by those paths in $\C$ whose chain of subobjects is of length $\leq k$.
We can think of this as a temporal parameter, restricting the number of sequential steps, or the number of rounds in a game. For the Ehrenfeucht-\Fraisse~and modal comonads, we recover $\RTk$ and $\RMk$ as described in sections~\ref{EFsec} and~\ref{modalsec}, corresponding to $k$-round versions of the Ehrenfeucht-\Fraisse~and modal bisimulation games respectively \cite{abramsky2021relating}. However, note that for the pebbling comonad, the relevant resource index is the number of pebbles, which is a memory restriction along a computation or play of a game. This leads to  $\RPk$ as described in~section~\ref{pebsec}.
\end{example}

\begin{proposition}\label{p:C_k-arboreal}
Let $\{ \C_k \}$ be a resource-indexed arboreal category. Then $\C_k$ is an arboreal category for each $k$.
\end{proposition}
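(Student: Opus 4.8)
The plan is to verify each clause in the definition of arboreal category directly for $\C_k$, inheriting all structure from $\C$. Everything rests on a single \emph{closure lemma}, which I would prove first: an object $X$ of $\C$ lies in $\C_k$ if and only if every path embedding $P\emb X$ has domain $P\in\Cp^k$. The ``if'' direction is immediate, since $\C$ is arboreal and hence every object is the colimit of \emph{all} its path embeddings; if these all have domain in $\Cp^k$, that colimit is exactly the one required for membership in $\C_k$. The ``only if'' direction is the crux and the main obstacle, discussed below. Granting the lemma, two consequences follow at once. First, the paths of $\C_k$ are exactly the objects of $\Cp^k$: any $P\in\Cp^k$ lies in $\C_k$ (all its subobjects are paths, and lie in $\Cp^k$ since $\Cp^k$ is embedding-closed, so the lemma applies) and has $\Emb{P}$ a finite chain; conversely a path of $\C_k$ is a path of $\C$ lying in $\C_k$, and its identity path embedding forces it into $\Cp^k$. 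Second, $\C_k$ is closed under embeddings in $\C$: if $Y\in\C_k$ and $Z\emb Y$, then every path embedding into $Z$ composes with $Z\emb Y$ to a path embedding into $Y$, hence has domain in $\Cp^k$, so $Z\in\C_k$ by the lemma.

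With these in hand the remaining axioms are routine inheritances. The factorization system of $\C$ restricts to $\C_k$: a morphism of $\C_k$ factors in $\C$ as a map in the left class followed by an embedding, and the intermediate object, being a subobject of a $\C_k$-object, again lies in $\C_k$ by closure under embeddings; orthogonality, properness and stability are then inherited because $\C_k$ is full and closed under the relevant subobjects. For coproducts of paths, given a family $\{P_i\}$ in $\Cp^k$ I would form $\coprod_i P_i$ in $\C$ and show it lies in $\C_k$ via the lemma: by connectedness of paths in $\C$, any path embedding $Q\emb\coprod_i P_i$ factors through a coproduct injection $P_i\into\coprod_i P_i$, exhibiting $Q$ as a path embedding into $P_i$, whence $Q\in\Cp^k$ by embedding-closure of $\Cp^k$; since $\C_k$ is full, this coproduct serves as the coproduct in $\C_k$.

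The ``2 out of 3'' condition and the connectedness of paths descend verbatim, the latter precisely because coproducts of $\Cp^k$-paths are computed as in $\C$. Finally, path-generation holds by construction: an object of $\C_k$ is, by definition, the colimit of its path embeddings with domain in $\Cp^k$, and since the $\C_k$-paths are exactly $\Cp^k$, this is literally the assertion that every object of $\C_k$ is path-generated (the colimit, computed in $\C$, lands in the full subcategory $\C_k$ and so is a colimit there as well). This completes the verification that $\C_k$ is an arboreal category.

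The hard part is the ``only if'' direction of the closure lemma: showing that if $X$ is the colimit of its $\Cp^k$-indexed path embeddings, then no ``longer'' path can embed into it. I would prove this by constructing the coreflection of $\C$ onto $\C_k$, sending $X$ to the colimit $\rho_k X$ of its path embeddings with domain in $\Cp^k$ (a down-set of the tree $\Path{X}$, since $\Cp^k$ is embedding-closed), together with its canonical embedding $\rho_k X\emb X$. The couniversal property of this map reduces, via path-generation and the factorization system, to the fact that a path embedding $Q\emb X$ with $Q\in\Cp^k$ factors through $\rho_k X$, which holds because $Q$ is one of the generating path embeddings; one then checks $\rho_k X\in\C_k$ and that $X\in\C_k$ exactly when $\rho_k X\emb X$ is an isomorphism, at which point the lemma reads off the fact that the path embeddings into $\rho_k X$ are exactly the chosen down-set of $\Path{X}$. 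The delicate point throughout is that these colimits are taken over tree-shaped, not directed, diagrams, so the argument must lean on connectedness of paths in $\C$ to guarantee that maps out of a single path $Q$ localise to one branch.
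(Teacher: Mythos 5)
Your global strategy is sound and in fact parallels the actual development: the survey itself gives no proof of Proposition~\ref{p:C_k-arboreal} (it defers all details to \cite{abramsky2021arboreal}), and in that reference the proposition is likewise reduced to the key fact you isolate as the ``closure lemma'': $X\in\C_k$ if and only if every path embedding $P\emb X$ has domain $P\in\Cp^k$. Granting that lemma, your deductions are correct -- the identification of the paths of $\C_k$ with the objects of $\Cp^k$, closure of $\C_k$ under embeddings, restriction of the factorization system, construction of coproducts of paths via connectedness in $\C$, and path-generation via fullness of $\C_k$ all go through essentially as you describe.

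The problem is in the one place you yourself flag as hard, and it is a genuine gap: your proof of the ``only if'' direction of the closure lemma is circular. If $X\in\C_k$, then by the very definition of $\C_k$ the colimit $\rho_k X$ of the path embeddings with domain in $\Cp^k$ exists and \emph{is} $X$, with $\rho_k X\emb X$ the identity; so your concluding step, that ``the path embeddings into $\rho_k X$ are exactly the chosen down-set of $\Path{X}$'', is word for word the statement to be proved, and the coreflection construction contributes nothing towards it. (Checking ``$\rho_k X\in\C_k$'' runs into the same issue, since that requires knowing which path embeddings $\rho_k X$ has; and for general $X$ the object $\rho_k X$ need not even exist -- the axioms provide only coproducts of paths, not colimits of tree-shaped diagrams, so membership in $\C_k$ is precisely the assertion that this colimit exists and equals $X$.) The missing content is a genuine lemma: if $X$ is the colimit of a down-set $D\subseteq\Path{X}$ of path embeddings, then \emph{every} path embedding $Q\emb X$ factors through some member of $D$. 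This does not follow from the connectedness axiom, which speaks only of coproducts $\coprod_i P_i$ and not of tree-shaped colimits in which components are glued along shared subpaths. A correct argument must engage the factorization system: for instance, one shows that the induced map $e\colon\coprod_{d\in D}P_d\to X$ is a quotient (factor it as quotient followed by embedding and use the colimit property to split the embedding); one then pulls $e$ back along $Q\emb X$, using stability and the behaviour of path coproducts under such pullbacks, to obtain a quotient $\coprod_d (P_d\times_X Q)\to Q$; since $\Emb{Q}$ is a finite chain, the subobjects $P_d\times_X Q$ of $Q$ have a maximum, through which this quotient factors, and orthogonality forces that maximum to be all of $Q$, so $Q$ embeds in some $P_d$ and lands in $\Cp^k$ by embedding-closure. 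Some argument of this kind -- this is the work carried out in \cite{abramsky2021arboreal} -- is what your proposal is missing; without it the closure lemma, and hence the proposition, remains unproved.
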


\begin{definition}
Let $\{ \C_k \}$ be a resource-indexed arboreal category. We define a \emph{resource-indexed arboreal cover} of $\E$ by $\C$ to be an indexed family of comonadic adjunctions
\[ \begin{tikzcd}
\C_k \arrow[r, bend left=25, ""{name=U, below}, "L_k"{above}]
\arrow[r, leftarrow, bend right=25, ""{name=D}, "R_k"{below}]
& \E
\arrow[phantom, "\textnormal{\footnotesize{$\bot$}}", from=U, to=D]
\end{tikzcd}
\]
with corresponding comonads $G_k$ on $\E$.
\end{definition}

\begin{example}\label{indarbcovex}
Our key examples arise by taking the extensional category $\E$ to be $\CS$. For each $k \geq 0$, there are evident forgetful functors
\[ \LE_k\colon \RTk(\sg) \to \CS, \quad \LP_k\colon \RPk(\sg) \to \CS\]
which forget the forest order, and in the case of $\RPk$, also the pebbling function. These functors are both comonadic over $\CS$. The right adjoints build a forest over a structure $\As$ by forming sequences of elements over the universe $A$, suitably labelled and with the $\sg$-relations interpreted so as to satisfy the conditions (E) and (P) respectively.
In the modal logic case, the extensional category $\E$ is the category $\CSstar$ of \emph{pointed} $\sg$-structures $(A,a)$ with $a \in A$, and morphisms the $\sg$-homomorphisms preserving the distinguished point. There is a forgetful functor
\[\LM_k\colon \RMk \to \CSstar\]
sending $(\As, {\leq})\in\RMk$ to $(\As,a)$, where $a$ is the unique root of $(\As, {\leq})$. This functor is comonadic and its right adjoint sends a pointed $\sg$-structure $(\As,a)$ to the tree-ordered structure obtained by unravelling the structure $\As$, starting from $a$, to depth $k$, and with the $\sg$-relations interpreted so as to satisfy the condition (M), which is a suitable ``local'' version of condition (E) \cite{abramsky2021relating}.

These constructions yield the comonads described concretely in \cite{abramsky2017pebbling,abramsky2021relating}. The sequences correspond to plays in the Ehrenfeucht-Fra\^{i}ss\'e and pebbling
games respectively. The work needed to show the correspondence between each of these games and the generic game $\G(X,Y)$ consists only in matching the winning conditions, which is quite straightforward; see \cite{abramsky2021relating}.
\end{example}

We now show how resource-indexed arboreal covers can be used to define important notions on the extensional category.
For a resource-indexed arboreal cover of $\E$ by $\C$, with adjunctions $L_k\: {\dashv} \: R_k$ and comonads $G_k$, we define three resource-indexed relations on objects of $\E$, in terms of their images in $\C_k$ under $R_k$.
\begin{definition}\label{def:resource-ind-equivalence-rel}
Consider a resource-indexed arboreal cover of $\E$ by $\C$, and any two objects $a,b$ of $\E$. We define:
\begin{itemize}
    \item $a \eqaRk b$ iff there are  morphisms $R_k a \to R_k b$ and $R_k b \to R_k a$ in $\C_k$.
      \item $a \eqbRk b$ iff there is a bisimulation between $R_k a$ and $R_k b$ in $\C_k$.
        \item $a \eqcRk b$ iff $R_k a$ and $R_k b$ are isomorphic in  $\C_k$.
\end{itemize}
\end{definition}

\begin{proposition}\label{pr:bisim-arb-cover}
Assume $\E$ has binary products. For objects $a$ and $b$ of $\E$, $a \eqbRk b$ iff Duplicator has a winning strategy in the game $\G(R_k a,R_k b)$.
\end{proposition}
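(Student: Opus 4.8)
The plan is to reduce the statement to Theorem~\ref{th:gamesiffbisim}, the abstract game-versus-bisimulation correspondence, applied inside the category $\C_k$. First I would note that by Proposition~\ref{p:C_k-arboreal} the category $\C_k$ is itself arboreal, so bisimulation (as a span of open pathwise embeddings) and the back-and-forth game $\G(-,-)$ are both available there, and moreover the notion of bisimulation appearing in Definition~\ref{def:resource-ind-equivalence-rel} is literally the one used in Theorem~\ref{th:gamesiffbisim}. Unwinding Definition~\ref{def:resource-ind-equivalence-rel}, the relation $a \eqbRk b$ holds exactly when there is a bisimulation between $R_k a$ and $R_k b$ in $\C_k$, i.e.\ exactly when $R_k a$ and $R_k b$ are bisimilar in $\C_k$. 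Thus, setting $X \coloneqq R_k a$ and $Y \coloneqq R_k b$, the proposition becomes precisely the instance of Theorem~\ref{th:gamesiffbisim} asserting that $X$ and $Y$ are bisimilar iff Duplicator wins $\G(X,Y)$ --- provided the hypothesis of that theorem is met.

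The only hypothesis of Theorem~\ref{th:gamesiffbisim} that is not immediate is the existence of the product $R_k a \times R_k b$ in $\C_k$, and this is exactly where the assumption that $\E$ has binary products enters. The key observation is that $R_k$, being the right adjoint of the comonadic adjunction $L_k \dashv R_k$, preserves all limits that exist in $\E$; in particular it carries the binary product $a \times b$ of $\E$ to a product in $\C_k$. Concretely, I would verify that $R_k(a \times b)$, equipped with the images $R_k(\pi_a)$ and $R_k(\pi_b)$ of the projections, satisfies the universal property of the product of $R_k a$ and $R_k b$ in $\C_k$, via the natural chain of bijections
\begin{align*}
\mathrm{Hom}_{\C_k}(Z, R_k a) \times \mathrm{Hom}_{\C_k}(Z, R_k b)
&\cong \mathrm{Hom}_{\E}(L_k Z, a) \times \mathrm{Hom}_{\E}(L_k Z, b) \\
&\cong \mathrm{Hom}_{\E}(L_k Z, a \times b) \\
&\cong \mathrm{Hom}_{\C_k}(Z, R_k(a \times b)),
\end{align*}
where the outer isomorphisms are the adjunction $L_k \dashv R_k$ and the middle one is the universal property of the product in $\E$. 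Hence the required product exists in $\C_k$ (and is carried by $R_k(a \times b)$).

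With the product in hand, Theorem~\ref{th:gamesiffbisim} applies verbatim to $X = R_k a$ and $Y = R_k b$, giving that $R_k a$ and $R_k b$ are bisimilar in $\C_k$ if and only if Duplicator has a winning strategy in $\G(R_k a, R_k b)$. Combined with the reformulation of $\eqbRk$ from the first paragraph, this is exactly the claimed equivalence. I expect the product-existence step to be the main (and essentially only) obstacle: since the coalgebra category $\C_k$ need not inherit arbitrary products from $\E$, one cannot simply cite ``products transfer along the cover'' but must exploit the specific shape $R_k a \times R_k b$ --- products of objects lying in the image of the right adjoint --- which is precisely what the limit-preservation of $R_k$ supplies. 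Everything else is a matter of matching definitions and invoking the already-established abstract theorem.
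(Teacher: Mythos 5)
Your proposal is correct and follows exactly the intended argument: unwind Definition~\ref{def:resource-ind-equivalence-rel}, note via Proposition~\ref{p:C_k-arboreal} that $\C_k$ is arboreal, and supply the product hypothesis of Theorem~\ref{th:gamesiffbisim} by observing that the right adjoint $R_k$ preserves binary products, so that $R_k(a\times b)$ serves as $R_k a \times R_k b$ in $\C_k$. This is precisely how the hypothesis ``$\E$ has binary products'' is meant to be used, so nothing further is needed.
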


Once the correspondence between the abstract games $\G(R_k a,R_k b)$ (and variants for the existential positive and counting quantifier cases) and the standard model comparison games has been established, we obtain the connections to logical equivalences as in Theorems~\ref{EFthm} and ~\ref{pebblethm} as corollaries.

\section{Homomorphism preservation theorems: towards axiomatic resource-indexed model theory}

As we have seen, comonadic semantics has now been given for a number of important fragments of first-order logic, including the quantifier rank fragments, the finite variable fragments, the modal fragment, and guarded fragments. In the landscape emerging from these constructions, some salient properties have come to the fore. These are properties which a comonad,  arising from an \emph{arboreal cover} in the sense of  \cite{abramsky2021arboreal}, may or may not have:
\begin{itemize}
\itemsep=0.9pt
\item The comonad may be \emph{idempotent}, meaning that the comultiplication is a natural isomorphism. Idempotent comonads correspond to \emph{coreflective subcategories}, which form the Eilenberg-Moore categories of these comonads. The modal comonads $\Mk$ are idempotent. The corresponding coreflective subcategories are of those modal structures which are tree-models to depth $k$ \cite{abramsky2021relating}.
\item The comonad $\C$ may satisfy the following property: for each structure $\As$, $\C \As \lrarr^{\C} \As$, where $\lrarr^{\C}$ is the bisimulation equivalence defined for $\C$ in terms of open pathwise embeddings.
We shall call this the \emph{bisimilar companion} property.
Note that an idempotent comonad, such as $\Mk$, will automatically have this property. The guarded comonads $\mathbb{G}_k$ from \cite{abramsky2021comonadic} are not idempotent, but have the bisimilar companion property, which is thus strictly weaker.
\item Finally, the comonads $\Ek$ and $\Pk$ have neither of the above properties. Unlike the modal and guarded fragments, the quantifier rank and finite variable fragments cover the whole of first-order logic, so we call these comonads \emph{expressive}.
\end{itemize}
Thus we have a strict hierarchy of comonads in the arboreal categories framework:
\begin{center}
idempotent $\Rightarrow$ bisimilar companions $\Rightarrow$ arboreal.
\end{center}
This hierarchy is correlated with tractability: the modal and guarded fragments are decidable, and have the tree-model property \cite{vardi1997modal,gradel1999modal}, while the expressive fragments do not.
We can regard these observations as a first step towards using structural properties of comonadic semantics to classify logic fragments and their expressive power.

\subsection{Model classes}\label{s:model-classes}
Suppose we are given a resourced-indexed arboreal cover between an extensional category $\E$ and an arboreal category $\C$. As explained in section~\ref{arbsec}, this induces resource-indexed bisimulation relations  $\lrarr^{\C}_{k}$ on $\E$. In the examples, $\E$ is typically the category $\CS$ of relational structures, and the relations $\eqbCk$ coincide with equivalence in logic fragments $\LL_k$.
There is also the homomorphism relation $\to^{\C}$, where $A \to^{\C} B$ iff $\C A \to B$, and the resource-indexed version $\to^{\C}_{k}$.

Given a formula $\phi\in\LL_k$, we consider its ``model-class'' $\Mod(\phi)$, \ie the full subcategory of $\CS$ defined by the $\sg$-structures $\As$ such that $\As\models\phi$. In the next lemma we show that, if $\LL_k$ is finite\footnote{Up to logical equivalence.}, these model-classes can be characterised in terms of the relations $\eqbCk$. For simplicity, we restrict to the case where $\E=\CS$, but extensions are available, e.g.~to pointed versions as for the modal fragment.

\begin{lemma}
Consider an arboreal adjunction between $\CS$ and an arboreal category $\C$, and let $\LL$ be a finite logic fragment such that $\leftrightarrow^{\C}$ coincides with $\equiv^{\LL}$. The following statements are equivalent for any full subcategory $\D$ of $\CS$:
\begin{enumerate}
\itemsep=0.9pt
\item $\D$ is saturated under $\leftrightarrow^{\C}$, \ie for all $\sg$-structures $\As,\Bs$, if $\As\in\D$ and $\As \leftrightarrow^{\C} \Bs$, then $\Bs\in\D$.
\item $\D=\Mod(\phi)$ for some $\phi\in\LL$.
\end{enumerate}
\end{lemma}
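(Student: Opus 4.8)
The plan is to establish the two implications separately; $(2)\Rightarrow(1)$ is routine, while $(1)\Rightarrow(2)$ carries the content.

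For $(2)\Rightarrow(1)$, I would argue directly. Suppose $\D=\Mod(\phi)$ with $\phi\in\LL$, and let $\As\in\D$ with $\As\leftrightarrow^{\C}\Bs$. By hypothesis $\leftrightarrow^{\C}$ coincides with $\equiv^{\LL}$, so $\As$ and $\Bs$ satisfy exactly the same formulas of $\LL$; since $\As\models\phi$ and $\phi\in\LL$, also $\Bs\models\phi$, i.e. $\Bs\in\D$. Hence $\D$ is saturated under $\leftrightarrow^{\C}$.

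For $(1)\Rightarrow(2)$, the first step is to observe that finiteness of $\LL$ up to logical equivalence forces $\equiv^{\LL}$ (hence $\leftrightarrow^{\C}$) to have only finitely many equivalence classes: fixing representatives $\phi_1,\ldots,\phi_m$ of $\LL$ modulo equivalence, the class of $\As$ is determined by the set $\{i : \As\models\phi_i\}$, so there are at most $2^m$ classes. The second step is to define, for each structure $\As$, its characteristic formula
\[ \tau_{\As}\ :=\ \bigwedge_{i\,:\,\As\models\phi_i}\phi_i\ \wedge\ \bigwedge_{i\,:\,\As\not\models\phi_i}\neg\phi_i, \]
which lies in $\LL$ provided $\LL$ is closed under finite Boolean combinations, and which satisfies $\Bs\models\tau_{\As}$ if and only if $\Bs\equiv^{\LL}\As$. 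Given a saturated $\D$, it is a union of $\equiv^{\LL}$-classes, and by the finiteness above this union is finite; choosing one representative $\As_j$ from each class contained in $\D$, I would set $\phi:=\bigvee_j\tau_{\As_j}$ and verify $\Mod(\phi)=\D$, taking $\phi$ to be any contradiction in $\LL$ (e.g. $\phi_1\wedge\neg\phi_1$) in the degenerate case $\D=\emptyset$.

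The main point requiring care is the closure of $\LL$ under Boolean operations, which is exactly what makes the characteristic formulas $\tau_{\As}$ and their disjunction available inside $\LL$. This holds for the intended fragments $\Lk$, since negation preserves quantifier rank and $\wedge,\vee$ take the maximum; but it is genuinely needed, since without it there can be many more $\leftrightarrow^{\C}$-saturated subclasses than there are individual model classes, and the equivalence would fail. Once this closure is granted, the two remaining ingredients — finiteness of the quotient and the identification of saturated classes with unions of $\equiv^{\LL}$-classes — are immediate.
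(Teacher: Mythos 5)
Your proposal is correct and follows essentially the route the paper intends: this is a survey, and it states the lemma without proof, but the argument underlying the cited Abramsky--Reggio work is precisely yours --- (2)$\Rightarrow$(1) is immediate from the coincidence of $\leftrightarrow^{\C}$ with $\equiv^{\LL}$, while (1)$\Rightarrow$(2) uses finiteness of $\LL$ up to logical equivalence to cut $\CS$ into finitely many $\equiv^{\LL}$-classes, each defined by a characteristic formula, so that a saturated $\D$ is the model class of the finite disjunction over the classes it contains. Your further remark that closure of $\LL$ under Boolean connectives is genuinely needed, and is satisfied by the intended fragments such as $\Lk$, is a correct reading of what the paper leaves implicit in the phrase ``logic fragment''.
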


For applications to homomorphism preservation theorems, the following variant of the previous lemma will be useful, where $\EPL$ denotes the existential positive fragment of $\LL$.

\begin{lemma}\label{l:mc-pe}
Consider an arboreal adjunction between $\CS$ and an arboreal category $\C$, and let $\LL$ be a finite logic fragment such that $\to^{\C}$ coincides with $\equiv^{\EPL}$. The following statements are equivalent for any full subcategory $\D$ of $\CS$:
\begin{enumerate}
\item $\D$ is upwards closed with respect to $\to^{\C}$, \ie for all $\sg$-structures $\As,\Bs$, if $\As\in\D$ and $\As \to^{\C} \Bs$, then $\Bs\in\D$.
\item $\D=\Mod(\psi)$ for some $\psi\in\EPL$.
\end{enumerate}
\end{lemma}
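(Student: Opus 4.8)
The plan is to prove the two implications separately; the direction $(2)\Rightarrow(1)$ is routine and $(1)\Rightarrow(2)$ carries the content. For $(2)\Rightarrow(1)$, suppose $\D=\Mod(\psi)$ with $\psi\in\EPL$. If $\As\in\D$ and $\As\to^{\C}\Bs$, then since by hypothesis $\to^{\C}$ coincides with the existential-positive preservation preorder $\equiv^{\EPL}$ and $\psi$ is existential positive, $\As\models\psi$ forces $\Bs\models\psi$; hence $\Bs\in\D$, and $\D$ is upward closed.

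For the substantive direction $(1)\Rightarrow(2)$, the device I would use is the existential-positive ``type'' of a structure. For each $\sg$-structure $\As$, set
\[ \theta_{\As} \; := \; \bigwedge\,\{\,\vphi\in\EPL : \As\models\vphi\,\}. \]
Because $\EPL$ is finite up to logical equivalence and the existential positive fragment is closed under finite conjunctions, $\theta_{\As}$ is (equivalent to) a formula of $\EPL$. By construction $\Bs\models\theta_{\As}$ holds exactly when every existential-positive sentence true in $\As$ is true in $\Bs$, \ie exactly when $\As\to^{\C}\Bs$, using the coincidence of $\to^{\C}$ with $\equiv^{\EPL}$. Thus $\Mod(\theta_{\As})$ is precisely the principal up-set $\{\Bs : \As\to^{\C}\Bs\}$ generated by $\As$.

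It then remains to assemble $\D$ and collapse it to a single formula. Since the counit $\varepsilon_{\As}\colon\C\As\to\As$ witnesses $\As\to^{\C}\As$, the relation $\to^{\C}$ is reflexive, so $\As\in\Mod(\theta_{\As})$; and upward closure of $\D$ gives $\Mod(\theta_{\As})\subseteq\D$ whenever $\As\in\D$. Hence $\D=\bigcup_{\As\in\D}\Mod(\theta_{\As})$. Finally, since $\EPL$ is finite up to equivalence there are only finitely many distinct types, so $\{\theta_{\As}:\As\in\D\}$ has finitely many representatives $\theta_1,\dots,\theta_m$; setting $\psi:=\theta_1\vee\cdots\vee\theta_m$, which lies in $\EPL$ by closure under finite disjunction, yields $\D=\Mod(\psi)$.

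I expect the only real obstacle to be bookkeeping: checking that $\theta_{\As}$ and $\psi$ genuinely remain inside $\EPL$ (closure of the existential-positive fragment under the finite Boolean combinations that finiteness of $\LL$ permits), and using the hypothesis ``$\to^{\C}$ coincides with $\equiv^{\EPL}$'' in its correct directional sense. The one conceptual point worth isolating is that reflexivity of $\to^{\C}$, supplied by the counit, is exactly what makes each principal up-set contain its generator and so lets the sets $\Mod(\theta_{\As})$ reconstruct $\D$.
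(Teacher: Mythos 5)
Your proof is correct, and since this survey states the lemma without proof (deferring to the cited work \cite{AR2022}), the natural comparison is with the argument given there, which is essentially yours: form the existential-positive type $\theta_{\As}$ of each structure, use the coincidence of $\to^{\C}$ with the existential-positive preservation preorder to identify $\Mod(\theta_{\As})$ with the principal up-set of $\As$, and write $\D$ as the finite disjunction of the finitely many types realised in it, with reflexivity of $\to^{\C}$ (via the counit) guaranteeing each generator lies in its own up-set. The closure properties you flag as bookkeeping -- finite conjunctions and disjunctions remaining in $\EPL$, including the empty ones (so that $\D=\emptyset$ and $\D=\CS$ are covered) -- are precisely what the notion of ``logic fragment'' is taken to provide in that setting, so your concern is legitimate but harmless.
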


\begin{remark}\label{rem:mc-pointed}
The previous lemmas can be extended to the case of \emph{pointed} (or, more generally, $n$-pointed) structures.
\end{remark}

\subsection{Homomorphism preservation theorems}
Homomorphism preservation theorems relate the syntactic shape of a sentence with the semantic property of being preserved under homomorphisms between structures. We are especially interested in refinements of these results whereby resources (e.g.\ the quantifier-rank or number of variables in a formula) are preserved.
Using the characterisations of model-classes given in Section~\ref{s:model-classes}, we can capture the content of homomorphism preservation theorems at an abstract level, as we now explain.

\medskip
Fix an arbitrary resource-indexed arboreal adjunction between an extensional category $\E$ and a resource-indexed arboreal category $\C$, with adjunctions
\begin{equation}\label{eq:k-arboreal-cover}
\begin{tikzcd}
\C_k \arrow[r, bend left=25, ""{name=U, below}, "L_k"{above}]
\arrow[r, leftarrow, bend right=25, ""{name=D}, "R_k"{below}]
& \E
\arrow[phantom, "\textnormal{\footnotesize{$\bot$}}", from=U, to=D]
\end{tikzcd}
\end{equation}
and associated comonads $G_k\coloneqq L_k R_k$ on $\E$.
Furthermore, let us say that a full subcategory $\D$ of $\E$ is \emph{closed under morphisms} if, whenever there is an arrow $a\to b$ in $\E$ with $a\in \D$, also $b\in \D$. Note that, when $\E=\CS$ and $\D=\Mod(\phi)$, the category $\D$ is closed under morphisms precisely when $\phi$ is preserved under homomorphisms between $\sg$-structures.

\medskip
Consider the following statement:
\begin{enumerate}[label=\textnormal{(HP)}]
\item\label{HP-abstract} For any full subcategory $\D$ of $\E$ saturated under $\eqbCk$, $\D$ is closed under morphisms iff it is upwards closed with respect to $\prCk$.
\end{enumerate}
E.g., for the Ehrenfeucht-\Fraisse~resource-indexed arboreal adjunction, the statement~\ref{HP-abstract} is precisely Rossman's equirank homomorphism preservation theorem \cite{Rossman2008}.

Replacing the relation $\eqbCk$ with the equivalence relation $\eqcCk$, we obtain a strengthening of \ref{HP-abstract}, namely:
\begin{enumerate}[label=\textnormal{(HP${}^\#$)}]
\item\label{HPplus-abstract} For any full subcategory $\D$ of $\E$ saturated under $\eqcCk$, $\D$ is closed under morphisms iff it is upwards closed with respect to $\prCk$.
\end{enumerate}

\begin{remark}\label{r:easy-dir-HPTs}
Any full subcategory of $\E$ that is upwards closed with respect to $\prCk$ is closed under morphisms. Hence, the right-to-left implications in~\ref{HP-abstract} and~\ref{HPplus-abstract} are always satisfied.
\end{remark}

In Section~\ref{s:bcp} we define the \emph{bisimilar companion property} for resource-indexed arboreal adjunctions, and show that this property entails \ref{HP-abstract}. This leads to  homomorphism preservation theorems for guarded logics.

In Section~\ref{s:idemp} we show that a strengthening of the bisimilar companion property, namely idempotency, implies \ref{HPplus-abstract}. We thus obtain equi-depth homomorphism preservation theorems for \emph{graded} modal formulas, which hold uniformly at the general and finite levels.

Let us point out that the bisimilar companion property does not hold for the Ehrenfeucht-\Fraisse~resource-indexed arboreal adjunction, so we cannot deduce Rossman's equirank homomorphism preservation theorem in this way. However, in Section~\ref{s:forcing} we explain how to ``force'' this property to obtain a refinement of Rossman's result.

\subsection{The bisimilar companion property}\label{s:bcp}
\begin{definition}
A resource-indexed arboreal adjunction between $\E$ and $\C$, with induced comonads $G_k$, has the \emph{bisimilar companion property} if $a \eqbCk G_k a$ for all $a\,{\in}\, \E$ and $k\,{\geq}\, 0$.
\end{definition}

\begin{proposition}\label{p:HPT-tame}
Consider any resource-indexed arboreal adjunction between $\E$ and $\C$ with the bisimilar companion property. Then~\ref{HP-abstract} holds.
\end{proposition}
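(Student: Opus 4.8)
The plan is to observe that one direction of~\ref{HP-abstract} is free, and then to derive the other by a short diagram chase whose only ingredient is the bisimilar companion property. By Remark~\ref{r:easy-dir-HPTs}, any full subcategory of $\E$ that is upwards closed with respect to $\prCk$ is automatically closed under morphisms, so the right-to-left implication of~\ref{HP-abstract} holds with no further hypotheses. It therefore suffices to establish the left-to-right implication: assuming $\D$ is saturated under $\eqbCk$ and closed under morphisms, show that $\D$ is upwards closed with respect to $\prCk$.

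For this I would fix an object $a\in\D$ and an object $b\in\E$ with $a\prCk b$, and argue that $b\in\D$. The first step is to unpack the resourced relation: by definition $a\prCk b$ means precisely that there is a morphism $G_k a\to b$ in $\E$. The second, and key, step is to pass from $a$ to its companion $G_k a$ without leaving $\D$. The bisimilar companion property supplies $a\eqbCk G_k a$; since $\eqbCk$ is an equivalence relation and $\D$ is saturated under it, $a\in\D$ forces $G_k a\in\D$. The final step simply combines these facts: we now have an object $G_k a\in\D$ together with a morphism $G_k a\to b$, so closure of $\D$ under morphisms yields $b\in\D$, which is exactly upward closure with respect to $\prCk$.

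I do not expect a genuine obstacle here; the whole content lies in aligning the three relations correctly. The one point that must be got right is that $\prCk$ is the homomorphism relation \emph{mediated by the comonad} --- $a\prCk b$ iff $G_k a\to b$ --- so that the companion $G_k a$ is precisely the object that converts the resourced relation $\prCk$ into an ordinary morphism of $\E$. The bisimilar companion property is exactly what keeps this companion inside $\D$, and once that is secured the saturation and morphism-closure hypotheses chain together at once. The resource index $k$ plays no special role beyond matching throughout: we invoke the companion property at the same $k$ that appears in the saturation and upward-closure conditions.
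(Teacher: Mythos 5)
Your proposal is correct and follows essentially the same route as the paper's own proof: the right-to-left direction is dispatched by Remark~\ref{r:easy-dir-HPTs}, and the left-to-right direction is exactly the paper's chain $a \eqbCk G_k a \to b$, using saturation to keep $G_k a$ in $\D$ and morphism-closure to conclude $b \in \D$. (Your aside that $\eqbCk$ is an equivalence relation is not even needed, since saturation as defined applies directly to the single relation instance $a \eqbCk G_k a$.)
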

\begin{proof}
Let $\D$ be a full subcategory of $\E$ closed under morphisms and saturated under $\eqbCk$. If $a\prCk b$, there is a morphism $G_k a \to b$ and so, using the bisimilar companion property,
\[
a \, \eqbCk \, G_k a\,  \to \, b.
\]
Therefore, if $a\in \D$ then also $b\in \D$. That is, $\D$ is upwards closed with respect to $\prCk$.

\medskip
The converse direction follows from Remark~\ref{r:easy-dir-HPTs}.
\end{proof}

We can now use Proposition~\ref{p:HPT-tame} to obtain a homomorphism preservation theorem for guarded logics.

\subsection{Idempotency}\label{s:idemp}
\begin{definition}
A resource-indexed arboreal adjunction between $\E$ and $\C$ is \emph{idempotent} if so are the induced comonads $G_k$, \ie $G_k G_k a \cong G_k a$ for all $a\in \E$ and $k\,{\geq}\, 0$.
\end{definition}

\begin{proposition}\label{p:HPT-graded}
Consider any idempotent resource-indexed arboreal adjunction between $\E$ and $\C$. Then~\ref{HPplus-abstract} holds.
\end{proposition}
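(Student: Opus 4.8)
The plan is to follow the template of the proof of Proposition~\ref{p:HPT-tame}, simply replacing the bisimilar companion relation $\eqbCk$ by the stronger relation $\eqcCk$. The right-to-left implication of~\ref{HPplus-abstract} is immediate from Remark~\ref{r:easy-dir-HPTs}, which holds for any resource-indexed arboreal adjunction, so everything reduces to the forward direction: if $\D$ is closed under morphisms and saturated under $\eqcCk$, then it is upwards closed with respect to $\prCk$.

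The engine of the argument, playing exactly the role that the bisimilar companion property played in Proposition~\ref{p:HPT-tame}, will be the claim that idempotency supplies a \emph{counting} companion, namely
\[ a \eqcCk G_k a \qquad \text{for all } a \in \E \text{ and } k \geq 0. \]
First I would establish this. Recall that the comultiplication is given by $\delta_a = L_k(\eta_{R_k a})$, where $\eta$ is the unit of $L_k \dashv R_k$, and observe that $R_k G_k a = R_k L_k R_k a$ is exactly the codomain of $\eta_{R_k a}$. Idempotency asserts that $\delta_a$ is an isomorphism; since the adjunction is comonadic, $L_k$ is conservative, i.e.\ reflects isomorphisms, so from $L_k(\eta_{R_k a}) = \delta_a$ being invertible we conclude that $\eta_{R_k a}\colon R_k a \to R_k G_k a$ is itself an isomorphism. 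This is precisely $R_k a \cong R_k G_k a$, that is, $a \eqcCk G_k a$.

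Granting the claim, the forward direction runs as in Proposition~\ref{p:HPT-tame}. Suppose $a \in \D$ and $a \prCk b$; by definition there is a morphism $G_k a \to b$ in $\E$. Since $\D$ is saturated under $\eqcCk$ and $a \eqcCk G_k a$, we obtain $G_k a \in \D$; and since $\D$ is closed under morphisms, the arrow $G_k a \to b$ forces $b \in \D$. Hence $\D$ is upwards closed with respect to $\prCk$, which together with the easy direction completes the proof.

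The only genuinely new ingredient over Proposition~\ref{p:HPT-tame} is the companion claim, and I expect its verification to be the main, if brief, obstacle: the point to get right is that idempotency, phrased as invertibility of $\delta_a$, transfers to invertibility of the unit component $\eta_{R_k a}$ precisely because the comonadic left adjoint $L_k$ is conservative and $\delta_a = L_k(\eta_{R_k a})$. Once $a \eqcCk G_k a$ is in hand, saturation under $\eqcCk$ does the work that saturation under $\eqbCk$ did before, and no further features of the arboreal structure are needed.
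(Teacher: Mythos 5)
Your proof is correct and follows essentially the same route as the paper: both reduce to showing $a \eqcCk G_k a$ via the fact that idempotency makes $\eta_{R_k a}\colon R_k a \to R_k G_k a$ an isomorphism, and then rerun the argument of Proposition~\ref{p:HPT-tame} with $\eqcCk$ in place of $\eqbCk$. The only difference is cosmetic: the paper simply recalls that $G_k$ is idempotent iff $\eta R_k$ is a natural isomorphism, whereas you derive the needed direction explicitly from $\delta_a = L_k(\eta_{R_k a})$ and conservativity of the comonadic functor $L_k$, which is a perfectly valid justification of the same fact.
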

\begin{proof}
Recall that $G_k$ is idempotent if, and only if, $\eta R_k$ is a natural isomorphism, where $\eta$ is the unit of the adjunction $L_k \dashv R_k$. In particular, for any $a\in \E$, the component of $\eta R_k$ at $a$ yields an isomorphism $R_k a \cong R_k G_k a$. Hence, $a \eqcCk G_k a$ for all $a\in\E$.
Reasoning as in the proof of Proposition~\ref{p:HPT-tame}, it is easy to see that~\ref{HPplus-abstract} holds.
\end{proof}

\noindent
\textbf{Graded Modal Logic}
Let $\sigma$ be an arbitrary finite vocabulary consisting of relation symbols of arity at most $2$.
We consider the resource-indexed relations $\eqik$ and $\prk$ on the category $\CSstar$ of pointed Kripke structures induced by the modal resource-indexed arboreal cover, cf.~section~\ref{modalsec}.

\smallskip
Proposition~\ref{p:HPT-graded} entails the following equidepth homomorphism preservation theorem for graded modal formulas:

\begin{theorem}\label{th:hpt-graded-modal-logic}
The following statements are equivalent for any graded modal formula $\phi\in\ML_k(\#)$:
\begin{enumerate}
\itemsep=0.9pt
\item $\phi$ is preserved under homomorphisms between pointed Kripke structures.
\item $\phi$ is logically equivalent to an existential positive modal formula $\psi\in\exists\ML_k$.
\end{enumerate}
\end{theorem}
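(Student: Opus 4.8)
The plan is to read the theorem off as the concrete instance of Proposition~\ref{p:HPT-graded} for the modal resource-indexed arboreal cover of $\CSstar$ by $\RMk$. Two ingredients must be combined: that this cover is idempotent, so that~\ref{HPplus-abstract} becomes available; and the logical dictionary translating the abstract relations $\eqik$ and $\prk$ (the modal instances of $\eqcCk$ and $\prCk$) into graded and existential-positive modal equivalence.

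First I would invoke idempotency. As recorded in Section~\ref{modalsec}, the modal comonad $\Mk$ is idempotent, so the hypothesis of Proposition~\ref{p:HPT-graded} is met and~\ref{HPplus-abstract} holds for this cover: any full subcategory $\D$ of $\CSstar$ that is saturated under $\eqik$ is closed under morphisms if and only if it is upwards closed under $\prk$. I would then fix a formula $\phi\in\ML_k(\#)$ and take $\D=\Mod(\phi)$. To apply~\ref{HPplus-abstract} I must check that $\Mod(\phi)$ is saturated under $\eqik$; this is immediate from the easy direction of the counting correspondence, namely that $\eqik$-equivalent pointed structures agree on all graded modal formulas of depth $\leq k$, so that no $\phi\in\ML_k(\#)$ can separate them. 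Under the pointed reading (Remark~\ref{rem:mc-pointed}) of the observation preceding~\ref{HP-abstract}, the left-hand side of~\ref{HPplus-abstract} says precisely that $\phi$ is preserved under homomorphisms of pointed Kripke structures, which is statement~(1).

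It remains to identify the right-hand side, that $\Mod(\phi)$ is upwards closed under $\prk$, with statement~(2). For this I would apply the pointed version (Remark~\ref{rem:mc-pointed}) of Lemma~\ref{l:mc-pe} to the fragment $\exists\ML_k$, whose preservation preorder is $\prk$. The hypothesis to secure is that $\exists\ML_k$ is finite up to logical equivalence: over the finite vocabulary $\sigma$ of arity at most $2$ this follows by induction on modal depth, since the existential-positive fragment is generated only by $\vee,\wedge,\Diamond$ and, lacking counting modalities, produces finitely many inequivalent formulas at each depth. Lemma~\ref{l:mc-pe} then gives that $\Mod(\phi)$ is upwards closed under $\prk$ if and only if $\Mod(\phi)=\Mod(\psi)$ for some $\psi\in\exists\ML_k$, i.e.\ if and only if $\phi$ is logically equivalent to such a $\psi$. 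Chaining the three equivalences proves (1)$\Leftrightarrow$(2).

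The step I expect to be the main obstacle is the logical dictionary rather than the categorical core, which Proposition~\ref{p:HPT-graded} already provides. Concretely, one must establish the modal counterparts of Theorems~\ref{EFthm} and~\ref{pebblethm}: that $\prk$ coincides with $\exists\ML_k$-preservation and that $\eqik$ coincides with graded-modal equivalence on pointed structures. The counting side is the delicate one, since matching isomorphism of $k$-unravellings in $\RMk$ with graded modal logic requires showing that the graded modalities count exactly the successor multiplicities that survive the identification of bisimilar-but-distinct successors in the unravelling; together with the finiteness of $\exists\ML_k$, this is what legitimises both the saturation step and the application of Lemma~\ref{l:mc-pe}.
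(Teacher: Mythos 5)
Your proposal follows essentially the same route as the paper's proof: idempotency of $\Mk$ feeding Proposition~\ref{p:HPT-graded}, saturation of $\Mod(\phi)$ under $\eqik$ via the inclusion of $\eqik$ into $\equiv^{\ML_k(\#)}$ (which the paper obtains by citing \cite{DBLP:conf/csl/AbramskyS18} and \cite{deRijke2000}), and then the pointed version (Remark~\ref{rem:mc-pointed}) of Lemma~\ref{l:mc-pe} with $\LL=\ML_k$. Your extra care about the finiteness of $\exists\ML_k$ up to logical equivalence and the coincidence of $\prk$ with $\exists\ML_k$-preservation simply makes explicit the hypotheses of Lemma~\ref{l:mc-pe} that the paper leaves implicit or delegates to cited results, so there is no substantive difference in approach.
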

\begin{proof}
Fix an arbitrary formula $\phi\in\ML_k(\#)$. By \cite[Proposition~15]{DBLP:conf/csl/AbramskyS18} and~\cite[Proposition~3.6]{deRijke2000}, there is an inclusion
\[
{\eqik} \subseteq {\equiv^{\ML_k(\#)}}
\]
and so $\Mod(\phi)$ is saturated under $\eqik$. As the comonad $\Mk$ is idempotent, Proposition~\ref{p:HPT-graded} entails that $\Mod(\phi)$ is closed under morphisms if, and only if, it is upwards closed with respect to $\prk$. Thus, the statement follows from Lemma~\ref{l:mc-pe} and Remark~\ref{rem:mc-pointed}, setting $\mathcal{L}=\ML_k$.
\end{proof}

\begin{remark}
Forgetting about both graded modalities and modal depth, Theorem~\ref{th:hpt-graded-modal-logic} implies that a modal formula is preserved under homomorphisms if, and only if, it is equivalent to an existential positive modal formula. This improves the well known result that a modal formula is preserved under simulations precisely when it is equivalent to an existential positive modal formula (see e.g.~\cite[Theorem~2.78]{blackburn2002modal}).
\end{remark}

Since the modal comonad $\Mk$ restricts to finite pointed Kripke structures, the resource-indexed arboreal cover given by $\RMk$ restricts to the full subcategory of $\CSstar$ defined by the finite Kripke structures. Reasoning as above, we obtain a variant of Theorem~\ref{th:hpt-graded-modal-logic} for finite structures:
\begin{theorem}\label{th:hpt-graded-modal-logic-finite}
The following statements are equivalent for any graded modal formula $\phi\in\ML_k(\#)$:
\begin{enumerate}
\itemsep=0.9pt
\item $\phi$ is preserved under homomorphisms between finite pointed Kripke structures.
\item $\phi$ is logically equivalent over finite pointed Kripke structures to an existential positive modal formula $\psi\in\exists\ML_k$.
\end{enumerate}
\end{theorem}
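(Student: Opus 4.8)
The plan is to re-run the proof of Theorem~\ref{th:hpt-graded-modal-logic} verbatim, but carried out inside the full subcategory of $\CSstar$ of finite pointed Kripke structures. The single fact that makes this transfer possible --- and that is the whole content of the ``model theory without compactness'' slogan --- is that the modal comonad $\Mk$ \emph{preserves finiteness}: the $k$-bounded unravelling of a finite structure is again finite, since depth is capped at $k$ and each node has finitely many successors. Consequently the right adjoint of $\LM_k$ restricts to the finite setting, so the entire modal resource-indexed arboreal cover cuts down to a resource-indexed arboreal cover of the category of finite pointed Kripke structures by the corresponding arboreal category of finite tree-ordered $\sg$-structures.

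First I would confirm that this restricted adjunction is again a resource-indexed arboreal cover, i.e.\ that it is comonadic and that the restricted category still satisfies the arboreal axioms. This is exactly where finiteness-preservation of $\Mk$ does the work: the comparison functor and its inverse both land in the finite subcategories, so the isomorphism with the Eilenberg--Moore category of coalgebras is inherited, and the path/colimit axioms survive because the relevant paths and their colimits are already finite. Idempotency of $\Mk$ is a structural property of the comonad --- equivalently, that $\eta R_k$ is a natural isomorphism --- and is plainly preserved on a full subcategory on which the comonad is defined. Hence Proposition~\ref{p:HPT-graded} applies to the restricted cover and yields~\ref{HPplus-abstract} over finite structures: any full subcategory $\D$ of the finite category saturated under $\eqik$ is closed under morphisms iff it is upwards closed with respect to $\prk$, the latter now computed via the finite object $\Mk a$.

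Next I would supply the finite analogues of the remaining ingredients. The inclusion $\eqik \subseteq \equiv^{\ML_k(\#)}$ holds for all pairs of structures, hence a fortiori for finite ones, so $\Mod(\phi)$, taken over finite structures, is still saturated under $\eqik$. I would then invoke the finite versions of Lemma~\ref{l:mc-pe} and Remark~\ref{rem:mc-pointed} with $\mathcal{L}=\ML_k$: since $\sigma$ is finite and modal depth is bounded by $k$, the fragment $\ML_k$ remains finite up to logical equivalence over finite structures, and the coincidence of $\prk$ with existential positive modal equivalence likewise survives because $\Mk a$ stays finite. Combining these gives precisely the claimed equivalence, with item~(2) asserting definability by some $\psi\in\exists\ML_k$ \emph{over finite pointed Kripke structures}.

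The main obstacle is the finite analogue of Lemma~\ref{l:mc-pe}, the model-class characterisation. In the classical, compactness-driven route to preservation theorems this is exactly the step that fails in the finite --- the failure of the \L{}o\'{s}--Tarski theorem over finite structures being the standard cautionary example --- so one must verify that the proof of Lemma~\ref{l:mc-pe} uses \emph{only} the finiteness of the fragment and the combinatorial relation $\prk$, never compactness or saturation of infinite models. Tracing that argument, the witness it produces for upward closure is an image of $\Mk a$; because $\Mk$ keeps $\Mk a$ finite, this witness never leaves the finite subcategory, and the proof transfers unchanged. Checking this containment --- that every construction in the underlying proof remains within finite structures once $\Mk$ does --- is the crux, and everything else is routine restriction of already-established structure.
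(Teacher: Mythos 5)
Your proposal is correct and follows essentially the same route as the paper: the paper's entire argument is that $\Mk$ restricts to finite pointed Kripke structures, hence the modal resource-indexed arboreal cover restricts to the finite full subcategory of $\CSstar$, and the proof of Theorem~\ref{th:hpt-graded-modal-logic} (idempotency, Proposition~\ref{p:HPT-graded}, saturation of $\Mod(\phi)$ under $\eqik$, and the finite-fragment Lemma~\ref{l:mc-pe} with Remark~\ref{rem:mc-pointed}) is then re-run verbatim over finite structures. Your additional checks --- that the restricted adjunction stays comonadic and idempotent, and that Lemma~\ref{l:mc-pe} uses only finiteness of the fragment rather than compactness --- are exactly the details the paper leaves implicit in its phrase ``reasoning as above.''
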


\subsection{Forcing the bisimilar companion property}\label{s:forcing}
When a resource-indexed arboreal adjunction does not have the bisimilar companion property, Proposition~\ref{p:HPT-tame} cannot be applied to obtain an equi-resource homomorphism preservation theorem. This is the case, e.g., for Rossman's equirank homomorphism preservation theorem:
\begin{theorem}[{\cite{Rossman2008}}]\label{t:rossman-equirank}
The following statements are equivalent for any first-order sentence $\phi$ with quantifier-rank at most $k$:
\begin{enumerate}
\itemsep=0.9pt
\item $\phi$ is preserved under homomorphisms between $\sg$-structures.
\item $\phi$ is equivalent to an existential positive sentence with quantifier rank at most $k$.
\end{enumerate}
\end{theorem}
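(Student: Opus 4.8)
The forward implication (2) $\Rightarrow$ (1) is the routine direction: an existential positive sentence is preserved under homomorphisms, which in the present language is the observation (Remark~\ref{r:easy-dir-HPTs}) that a model class upward closed under $\prCk$ is automatically closed under morphisms, combined with Lemma~\ref{l:mc-pe}. So the whole content lies in (1) $\Rightarrow$ (2), and the plan is to derive it from the abstract statement~\ref{HP-abstract}, instantiated at the Ehrenfeucht-\Fraisse~resource-indexed arboreal cover. For that cover the induced comonad is $\Ek$, the relation $\eqbCk$ coincides with $\eqLk$, equivalence in the quantifier-rank-$\leq k$ fragment (Theorem~\ref{EFthm}), and $\prCk$ coincides with the existential positive quantifier-rank-$\leq k$ homomorphism preorder (Theorem~\ref{EFgamethm}); moreover $\Lk$ is finite up to logical equivalence. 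Since a sentence of quantifier rank $\leq k$ has a model class that is automatically saturated under $\eqbCk$, Lemma~\ref{l:mc-pe} (taking $\EPL = \ELk$) reduces (1) $\Rightarrow$ (2) to exactly the nontrivial, left-to-right implication of~\ref{HP-abstract}: every $\eqbCk$-saturated, morphism-closed full subcategory $\D$ of $\CS$ is upward closed under $\prCk$.

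The difficulty is that Proposition~\ref{p:HPT-tame} cannot be invoked, because $\Ek$ does not satisfy the bisimilar companion property: in general $\As$ and $\Ek \As$ fail to be $\eqbCk$-related, since $\Ek \As$ is an infinite, unbounded-depth tree cover. The strategy is therefore to \emph{force} a surrogate for that property. Concretely, I would construct for each $\sg$-structure $\As$ a \emph{companion} $\hat{\As}$ satisfying (i) $\hat{\As} \eqbCk \As$ and (ii) a homomorphism $\hat{\As} \to \Ek \As$. Property (ii) is the key replacement: composing $\hat{\As} \to \Ek \As$ with a witnessing homomorphism $\Ek \As \to \Bs$ (which exists exactly when $\As \prCk \Bs$) yields $\hat{\As} \to \Bs$. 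Granting such companions, the HP argument then mirrors the proof of Proposition~\ref{p:HPT-tame}: if $\As \in \D$ and $\As \prCk \Bs$, then from
\[
\As \; \eqbCk \; \hat{\As} \; \to \; \Bs
\]
we get $\hat{\As} \in \D$ by $\eqbCk$-saturation and then $\Bs \in \D$ by closure under morphisms, establishing the required upward closure.

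Everything thus reduces to the existence of the companion $\hat{\As}$, and this is the main obstacle: it is precisely the combinatorial core of Rossman's theorem recast in structural terms. The natural plan is to build $\hat{\As}$ as a structure of controlled tree-depth lying in the $\eqbCk$-class of $\As$ and mapping homomorphically into the cofree cover $\Ek \As$; the bound on tree-depth is governed by the coalgebra-number analysis of $\Ek$ (tree-depth being exactly the coalgebra number), and the finiteness of the set of $\eqbCk$-classes is used to carry out the construction uniformly. Producing a single $\eqbCk$-companion of $\As$ that is $\prCk$-least in this sense is where essentially all of the work of Rossman's argument resides; the arboreal framework supplies the clean reduction and the transport of membership through $\D$, but not a shortcut past this construction. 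This is the precise sense in which the bisimilar companion property is ``forced'' for the Ehrenfeucht-\Fraisse~cover, yielding the equirank refinement.
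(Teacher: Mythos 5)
Your reduction is faithful to the paper's own framing: by Lemma~\ref{l:mc-pe}, the saturation of $\Mod(\phi)$ under $\eqbCk$ for sentences of quantifier rank $\leq k$, and Theorems~\ref{EFgamethm} and~\ref{EFthm}, statement~\ref{HP-abstract} instantiated at the Ehrenfeucht-\Fraisse~cover \emph{is} Rossman's equirank theorem, and you rightly note that Proposition~\ref{p:HPT-tame} is unavailable because $\Ek$ lacks the bisimilar companion property. Bear in mind that the paper itself proves nothing beyond this reduction: Section~\ref{s:forcing} cites \cite{Rossman2008} and defers to \cite{abramsky2021comonadicwhither,AR2021,AR2022} for the companion construction, so the entire mathematical content lies in the step you also defer. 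The genuine gap is that the lemma you defer to is not merely unproven but false: a companion $\hat{\As}$ satisfying both (i) $\hat{\As} \eqbCk \As$ and (ii) a homomorphism $\hat{\As} \to \Ek \As$ need not exist, so Rossman's theorem cannot be recovered along this route.

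To see why (ii) is fatal, note that $\Ek \As$ has tree-depth at most $k$. If some non-bipartite structure maps into $\Ek\As$, then the image of one of its odd cycles yields an odd cycle that is a \emph{subgraph} of $\Ek\As$, hence of tree-depth $\leq k$, hence of length $\leq 2^k$; composing with the counit $\Ek\As \to \As$ forces $\As$ to have odd girth $\leq 2^k$. So whenever $\As$ is a graph of odd girth $> 2^k$, any candidate $\hat{\As}$ must be bipartite, and your lemma demands a bipartite graph $\equiv_k$-equivalent to $\As$. This fails for sufficiently large $k$: take $\As$ to be an odd cycle of length roughly $2^{k+1}$ with a small pendant tree attached at each vertex $i$, encoding $i$ in binary by leaves along a path of length about $k$. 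The gadgets create no new cycles, so the odd girth still exceeds $2^k$; but now every vertex $v$ is pinned by a formula $\psi_v(x)$ of quantifier rank $O(\log k)$ (reading bit $j$ only needs distance-$j$ formulas, of rank $O(\log j)$), and the sentences $\exists ! x\, \psi_v(x)$, $\forall x \bigvee_v \psi_v(x)$, and the adjacency/non-adjacency transfer sentences all have rank $\leq k$. Hence every structure $\equiv_k$-equivalent to $\As$ is isomorphic to $\As$, in particular non-bipartite, and no companion exists. The directional mistake is instructive: Rossman's companion --- the one the paper alludes to, described in \cite{abramsky2021comonadicwhither} as a colimit in a \emph{coslice} category and axiomatised in \cite{AR2021,AR2022} --- is built by gluing structure \emph{onto} a given structure, so the canonical map goes $\As \to \hat{\As}$; the homomorphism needed for~\ref{HP-abstract} is then produced \emph{into} a suitably saturated companion of the target $\Bs$, not \emph{out of} a companion of the source into the cofree object $\Ek\As$. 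Factoring through $\Ek\As$ is exactly the property that cannot be forced.
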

Rossman's strategy is to construct a bisimilar companion for a structure $\As$. A categorical view of Rossman's construction as a colimit in a coslice category is outlined in \cite{abramsky2021comonadicwhither}. A ``semi-axiomatic'' proof of Rossman's theorem, in the arboreal categories framework but where the extensional category $\E$ is fixed to be $\CS$, is given in \cite{AR2021}.
In \cite{AR2022}, a purely axiomatic proof of Rossman's theorem  is given in the arboreal categories setting.

In \cite{Paine2020}, the following refinement of Rossman's equirank homomorphism preservation theorem is obtained:
\begin{theorem}[Equirank-variable HPT]\label{t:equirank-variable}
The following statements are equivalent for any first-order sentence $\phi$ in $n$ variables with quantifier-rank at most $k$, where $n\geq k-2$:
\begin{enumerate}
\item $\phi$ is preserved under homomorphisms between $\sg$-structures.
\item $\phi$ is equivalent to an existential positive sentence in $n$ variables with quantifier rank at most $k$.
\end{enumerate}
\end{theorem}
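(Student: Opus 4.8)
The plan is to run the proof of Rossman's equirank theorem (Theorem~\ref{t:rossman-equirank}) through a resource-indexed arboreal cover whose comonad $G_k$ tracks \emph{both} the quantifier rank and the number of variables at once. Concretely, one works over the arboreal category of forest-ordered $\sg$-structures of height $\leq k$ carrying a pebbling function valued in $\{1,\dots,n\}$ --- a common refinement of the covers underlying $\RTk(\sg)$ and $\RPk(\sg)$. Writing $\LL$ for the (finite, up to equivalence) fragment of first-order formulas in $n$ variables of quantifier rank $\leq k$, and $\EPL$ for its existential positive part, the first task is to check the two coincidences $\eqbCk \,=\, \equiv^{\LL}$ and $\prCk \,=\, {\to}^{\EPL}$. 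These are established exactly as in Theorems~\ref{EFthm} and~\ref{pebblethm}, by matching the generic back-and-forth game $\G(R_k a,R_k b)$ --- and its existential positive variant --- to the model comparison game for $\LL$; here one simply intersects the pebble winning conditions with the height bound.

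The direction $(2)\Rightarrow(1)$ is immediate, since existential positive sentences are preserved under homomorphisms (Remark~\ref{r:easy-dir-HPTs}). For $(1)\Rightarrow(2)$, suppose $\phi\in\LL$ is preserved under homomorphisms. Then $\Mod(\phi)$ is closed under morphisms, and because $\phi\in\LL$ it is saturated under $\eqbCk$. By the variant of Lemma~\ref{l:mc-pe} for this cover, it then suffices to prove that $\Mod(\phi)$ is upwards closed with respect to $\prCk$: the defining formula $\psi\in\EPL$ produced by the lemma is automatically existential positive in $n$ variables of quantifier rank $\leq k$, which is exactly what the theorem demands. Thus the whole content is concentrated in the upward-closure step.

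To establish upward closure I would argue as in Proposition~\ref{p:HPT-tame}. The obstruction is that, just as the Ehrenfeucht-\Fraisse\ cover lacks the bisimilar companion property, so does this combined cover; hence $a\eqbCk G_k a$ fails and the cofree coalgebra cannot be used directly. The strategy is instead to \emph{force} the property, following the discussion of Section~\ref{s:forcing}: given $a\in\Mod(\phi)$ with $a\prCk b$, one constructs a bisimilar companion $c$ for $a$ relative to $b$, namely a structure with $a\eqbCk c$ together with a homomorphism $c\to b$. Saturation then gives $c\in\Mod(\phi)$, and closure under morphisms gives $b\in\Mod(\phi)$, as required. This adapts Rossman's colimit-style companion construction.

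The hard part will be carrying out this companion construction while keeping the number of variables within the budget $n$. Rossman's companion is assembled from a treelike unfolding whose depth is governed by the quantifier rank $k$; representing such an unfolding inside the $n$-variable, quantifier-rank-$\leq k$ category forces a careful reuse of pebbles along each branch, and this remains possible only when enough pebbles are available. The hypothesis $n\geq k-2$ is precisely the threshold at which the width of the unfolding can be covered by $n$ pebbles after maximal reuse. Verifying that the resulting $c$ is simultaneously $\eqbCk$-equivalent to $a$ and admits a homomorphism to $b$, all within the stated resource bounds, is the crux; once it is in hand, the remaining steps are the routine transfers through Lemma~\ref{l:mc-pe} and the Proposition~\ref{p:HPT-tame} scheme that are already established.
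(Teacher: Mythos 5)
For reference, the paper itself contains no proof of this statement: it is quoted as a result of Paine, with the citation \cite{Paine2020} standing in for the argument, and the surrounding discussion in Section~\ref{s:forcing} indicating the intended route (a comonad tracking rank and variables together, plus Rossman-style forcing of the bisimilar companion property). Your skeleton matches that route: the combined cover refining $\RTk(\sg)$ and $\RPk(\sg)$, the reduction via the variant of Lemma~\ref{l:mc-pe} to upward closure under $\prCk$, and the observation that since this cover, like the Ehrenfeucht-Fra{\"i}ss{\'e} one, lacks the bisimilar companion property, one must force a companion rather than invoke Proposition~\ref{p:HPT-tame} directly. So the strategic outline is sound and is essentially the one the cited work follows.

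The problem is that your proposal stops exactly where the theorem begins. The construction of the companion $c$ with $a \eqbCk c$ and a morphism $c \to b$, carried out \emph{within} the $n$-variable, rank-$\leq k$ budget, is the entire content of this result beyond Rossman's Theorem~\ref{t:rossman-equirank}; you acknowledge it as ``the crux'' but do not supply it, so what you have is a reduction of the theorem to its hardest step, not a proof. In particular, your account of the hypothesis $n \geq k-2$ --- that it is ``precisely the threshold at which the width of the unfolding can be covered by $n$ pebbles after maximal reuse'' --- is an unsubstantiated guess: nothing in the outline defines the relevant notion of width, derives the bound $k-2$, or shows the construction degrades below it. A secondary gap: the claimed coincidences $\eqbCk = \equiv^{\LL}$ and the existential-positive analogue for the combined cover do not follow ``exactly as in'' Theorems~\ref{EFthm} and~\ref{pebblethm}, because imposing the height bound and the pebbling condition (P) simultaneously changes the game being matched (one must check that the resulting game really characterises the fragment of $n$-variable formulas of quantifier rank $\leq k$); this verification is part of what \cite{Paine2020} actually has to do. Until both of these are filled in, the proposal is a correct plan rather than an argument.
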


\section{Current developments}
The research program we have described in this paper is  in a very lively and active state.
We shall briefly describe a few of the current lines of research.
\begin{itemize}
\itemsep=0.9pt
\item Homomorphism preservation theorems are only one example of semantic characterizations of various types of formulas.
Many of the classical results of this kind rely crucially on compactness, and fail for finite models. However, there are results which not only hold in the finite as well as the infinite, but admit \emph{uniform proofs} which apply to both cases.
One well-known example is the van Benthem-Rosen theorem, which characterises the modal fragment in terms of bisimulation invariance. A uniform proof of this result was given by Otto \cite{Otto2006}. A detailed version, extracting an important general result on Ehrenfeucht-\Fraisse~games (the Workspace Lemma), and clarifying the use of the bisimilar companions property, is given in \cite{AR2022}.
In \cite{abramsky2021hybrid}, game comonads are developed for hybrid logic and the bounded fragment.\footnote{In the bounded fragment, quantifiers are relativised to atomic predicates, but by contrast with the guarded fragment, variables in the guarded formulas do not have to appear in the guards. This makes the logic undecidable.} This logic lacks the bisimilar companion property, but it still has a local character. In \cite{abramsky2021hybrid}, a uniform proof, covering both finite and infinite cases, is given for a semantic characterization of this fragment. This uses the Workspace Lemma, but replaces the appeal to bisimilar companions made in the van Benthem Rosen theorem by a different argument, again using properties of model comparison games.

These uniform arguments point towards the possibility of a \emph{resource-sensitive model theory without compactness}.

\item A classic result of Lov\'{a}sz~\cite{Lovasz1967} says that two finite structures are isomorphic if and only if they admit the same number of homomorphisms from all finite structures. This result has been extended in many different ways. In one type of generalisation, isomorphisms are replaced by a specified equivalence relation $\asymp$ on finite structures, and the class of all finite structures by a specified class of finite structures $\Delta$. Then a typical Lov\'asz-type theorem expresses that, for finite structures $A,B$,
\[ A \asymp B \enspace\iff\enspace \hom(C,A) \cong \hom(C,B) \quad\text{for every } C\in \Delta. \]
In \cite{dawar2021lov}, the game comonads framework is used to give an axiomatic account, leading to several new Lov\'asz-type theorems.
This is further extended in \cite{reggio2021polyadic} using Joyal's theory of polyadic spaces, leading to results which apply also in infinite cases.

\item The comonadic analysis of both combinatorial parameters such as treewidth, and Lov\'asz-type theorems, leads to the question: when are such characterisations possible? In \cite{AJP2022}, very general conditions are identified under which there is a comonad which classifies a given quantitative parameter. The same analysis leads to general Lov\'asz-type results.
The conditions cover a huge range of concrete examples. The construction which establishes the result is a Kan extension leading to a discrete density comonad construction, which is weakly initial among solutions to the problem.

A question for further work is to relate this construction to the arboreal categories framework, with the aim of establishing general conditions for transfer results, which relate different parameters or counting theorems by varying the comonad.

\item In \cite{jakl2022game}, an axiomatic account is given of Feferman-Vaught-Mostowski composition theorems within the game comonad framework,
parameterized by the model comparison game. Compositionality
results for the corresponding logic, and its positive existential and counting quantifier variants, are obtained in a uniform manner. The game comonads are extended to a monadic second order version, and used to obtain an abstract version of Courcelle's theorem, which can be specialised to yield the  standard concrete versions of this result.

\item In \cite{AOC2021,SApresh22}, new cohomological approximation algorithms for constraint satisfaction and structure isomorphism are introduced, and are shown to properly extend standard local consistency and Weisfeiler-Leman algorithms.
These results make use of the sheaf-theoretic and cohomological methods originally developed for quantum contextuality in \cite{abramsky2011sheaf,DBLP:conf/csl/AbramskyBKLM15}.
\end{itemize}

\end{document}